\newtheorem{proposition}{Proposition}
\newtheorem{proposition?}{Proposition?}
\newtheorem{theorem}{Theorem}
\newtheorem{lemma}{Lemma}
\newtheorem{corollary}{Corollary}
\newtheorem{example}{Example}
\newtheorem{condition}{Condition}
\newcommand{\hi}{\mathcal{H}} 
\newcommand{\hik}{\mathcal{K}} 
\newcommand{\id}{\mathbf{1}} 
\newcommand{\nul}{0} 
\newcommand{\A}{\mathsf{A}}
\newcommand{\B}{\mathsf{B}}
\newcommand{\E}{\mathsf{E}}
\renewcommand{\P}{\mathsf{P}}
  \def\mathcomposite{%
     \@ifstar
        {\def\@mathcomposite@option{%
            \baselineskip\z@skip\lineskiplimit-\maxdimen}%
         \@mathcomposite}%
        {\let\@mathcomposite@option\offinterlineskip
         \@mathcomposite}}
  \def\@mathcomposite{%
     \@ifnextchar[\@@mathcomposite{\@@mathcomposite[0]}}
  \def\@@mathcomposite[#1]#2#3#4{%
     #2{\mathchoice
        {\@mathcomposite@{#1}{#3}{#4}\displaystyle{1}}%
        {\@mathcomposite@{#1}{#3}{#4}\textstyle{1}}%
        {\@mathcomposite@{#1}{#3}{#4}%
         \scriptstyle\defaultscriptratio}%
        {\@mathcomposite@{#1}{#3}{#4}%
         \scriptscriptstyle\defaultscriptscriptratio}}}
  \def\@mathcomposite@#1#2#3#4#5{%
     \vcenter{\m@th\@mathcomposite@option
        \dimen@\f@size\p@\dimen@#1\dimen@\dimen@#5\dimen@
        \divide\dimen@ 18
        \edef\@mathcomposite@skipamount{\the\dimen@}%
        \ialign{\hfil$#4##$\hfil\cr
           #2\crcr
           \noalign{\vskip\@mathcomposite@skipamount}%
           #3\crcr}}}
\begin{document}

\title[Energy-Time Uncertainty Relations in Quantum Measurements]{Energy-Time Uncertainty Relations in Quantum Measurements}

\author{Takayuki Miyadera}
\email{miyadera@nucleng.kyoto-u.ac.jp}
\affiliation{Department of Nuclear Engineering, Kyoto University - 6068501
Kyoto, Japan}

\pacs{03.65.Ta}

\begin{abstract}
Quantum measurement is a physical process. A system and an apparatus 
interact for a certain time period (measurement time), and
during this interaction, information about an observable is 
transferred from the system to the apparatus. 
In this study, we quantify the energy fluctuation of the quantum apparatus 
 required 
for this physical process to occur autonomously. 
We first examine the so-called standard model of measurement, 
which is free from any non-trivial energy-time uncertainty relation,   
to find that it needs an external system that switches on the interaction
between the system and the apparatus. 
In such a sense this model is not closed. Therefore to 
treat a measurement process in a fully quantum manner we need to consider 
a ``larger" 
quantum apparatus which works also as a timing device switching on the 
interaction.  
%
In this setting we prove that a trade-off relation (energy-time 
uncertainty relation), $\tau\cdot \Delta H_A
\geq \frac{\pi \hbar}{4}$, holds between the energy fluctuation 
$\Delta H_A$ of 
the quantum apparatus and the measurement time $\tau$. 
We use this trade-off relation 
to discuss the spacetime uncertainty relation concerning  
the operational meaning of the microscopic structure of spacetime. 
In addition, we derive another trade-off inequality between 
the measurement time and 
the strength of interaction between the system 
and the apparatus. 
\end{abstract}

\maketitle

\section{Introduction}\label{section1}
Quantum measurement is one of the simplest, yet most important, 
physical processes \cite{WZbook, vNbook, BLMbook, BGLbook, HZbook, Bell}. 
Without measurement, we see no event and 
 obtain no information. A quantum measurement
is a process that maps the quantum state of a quantum system 
to the classical state (probability distribution) of an external classical system
that belongs to the observer side. 
It is known 
that the interface (border)
between quantum and classical systems  
can be shifted. 
In fact, an observer does not interact with the system itself. 
Instead, they extract information from another system, 
a measurement apparatus, 
that has direct contact with the system. 
Both the system and the measurement apparatus can be treated 
quantum mechanically.
%
\par
The main purpose of this paper is to investigate 
the energy and interaction required for measuring an observable. 
More precisely, we investigate the energy of a quantum apparatus 
and the strength of interaction between a system and the apparatus
so that the process is fully described by quantum theory.  
To put the question from a more pragmatic viewpoint, 
our interest is in the ``amount of resource" required to perform 
a measurement (or information transfer) task 
\cite{StateReduction}. 
Thus, to study this problem, the interface
between quantum and classical systems
must be located such that the apparatus is treated 
quantum mechanically. 
For instance, although we have an equivalent 
minimum description (called an instrument \cite{Oza1984,
BLMbook, BGLbook, HZbook}) 
of
the dynamics  
and measurement results 
that puts the interface between the system and the apparatus, 
our problem does not allow us to employ it because our interest is in 
the limitations of the quantum apparatus.  
In section \ref{section2}, 
we discuss how ``large" the quantum side (or quantum apparatus) must be 
to describe a measurement process in a fully quantum manner.
We examine a so-called standard measurement model 
to find that the model requires an external system switching on the
measurement interaction between a system and an apparatus.
In this sense, the measurement process is not closed - it is not described 
in a fully quantum manner in this model. 
This model does not obey any non-trivial energy-time uncertainty relation.  
This conclusion agrees with previous results \cite{Aharonov, Busch2, BuschTEUR}.
Then, in section \ref{section3}, we rigorously formulate 
a quantum apparatus as a timing device that 
autonomously switches on a perturbation. 
An argument on the analyticity shows that the total Hamiltonian must be 
two-side unbounded to fulfill the conditions for the timing device. 
Our main results
are presented in section \ref{section4}. 
We consider a measurement apparatus that 
autonomously switches on the interaction 
at a certain time. 
We show that 
there is an energy-time uncertainty type 
relation between the fluctuation of the apparatus Hamiltonian 
and time duration of the measurement. 
The proof of this trade-off relation is given by combining 
two kinds of uncertainty relations. 
We first observe that 
a perfect measurement of a given observable implies a perfect 
distortion of the conjugate states. This property is called an 
information-disturbance relation and is related to an uncertainty 
relation for a joint measurement. We then 
employ the Mandelstam-Tamm uncertainty relation 
to obtain a restriction on time and energy required for the distortion.  
In addition, 
because the measurement process involves an information transfer 
from the system to the apparatus, the interaction between 
them should not be too weak. 
In fact, in section \ref{section5} we show 
a trade-off relation between the strength of interaction and 
the measurement time. 
The Robertson uncertainty relation plays an essential role
in the proof.
In section \ref{section6} we illustrate an argument of 
the spacetime uncertainty relation as an application of our result. 
\section{Measurement process as a physical process}\label{section2}
Let us consider a quantum system described by a Hilbert 
space $\hi$. A quantum state is represented as a density operator 
on $\hi$. We first present a description when we put an interface  
between the quantum and external classical systems just outside this 
system. 
By measuring an observable, we obtain an outcome. 
In general, the outcome is not definite and only a probability distribution on 
the outcome set is determined. 
Thus the measurement of an observable 
maps the quantum state of a quantum system 
to the classical state (probability distribution) of an external classical system
which belongs to the observer side. 
For a map to be consistent with an interpretation based on 
probability, it must be an affine map. 
Each affine map is completely described by a positive-operator-valued measure (POVM) 
in the quantum system (see e.g., \cite{BGLbook}). 
A POVM (with a discrete outcome set $\Omega$) is defined as a 
family of positive operators $\{\A(x)\}_{x\in \Omega_{\A}}$ 
satisfying $\sum_{x\in \Omega_{\A}}\A(x) =\id$. 
A POVM $\A=\{\A(x)\}_{\Omega_{\A}}$ 
gives a probability distribution $P_{\A}(x):= \mbox{tr}[\rho \A(x)]$ on 
an outcome set $\Omega_{\A}$. 
A POVM $\A=\{\A(x)\}_{x\in \Omega_{\A}}$ is called a projection-valued measure (PVM), or a sharp observable, if 
each $\A(x)$ is a projection operator. 
\par
Next we shift the interface between the quantum and classical systems. 
The measurement process is a 
physical interaction between a system 
and a measurement 
apparatus. 
We treat the apparatus as a quantum system 
described by a Hilbert space $\hik$. By taking $\hik$ to be sufficiently large, 
we can expect the total system to be (approximately) closed. 
Thus the total dynamics is specified by a Hamiltonian $H$. 
The total Hamiltonian can be decomposed into three parts 
\cite{Hamiltonian}: 
\begin{eqnarray*}
H=H_S + H_A +V, 
\end{eqnarray*}
where $H_S$ (resp. $H_A$) acts only on $\hi$ (resp. $\hik$) and 
$V$ describes an interaction Hamiltonian acting on $\hi \otimes \hik$. 
(More precisely, the total Hamiltonian should be written as 
$H= H_S \otimes \id + \id \otimes H_A +V$.) 
The initial state of the composite system at time $t=t_0$ is described 
by $\Theta(t_0):=\rho\otimes \sigma$ with an unknown state $\rho$ of the 
system and 
a fixed initial state $\sigma$ of the apparatus. This evolves according to 
the von Neumann equation up to a certain predetermined time $t=t_0+\tau$,  
where the time duration $\tau$ is called a measurement time duration. 
Then an observer outside these quantum systems observes a 
meter observable $\E=\{\E_x\}$, which is a POVM on the apparatus. 
This whole process is said to describe a measurement process of 
an observable $\A=\{\A_x\}$ if 
$\mbox{tr}[\A_x \rho] = \mbox{tr}[(\id \otimes \E_x)e^{-i\frac{H \tau}{\hbar}}
(\rho \otimes \sigma) e^{- \frac{H\tau}{\hbar}}]$ holds for every $\rho$. 
\par
The above physical measurement model 
is consistent with the abstract 
measurement theory by Ozawa's extension theorem \cite{Oza1984} and has been shown 
to be useful in analyzing real measurement processes. 
This model, however, is not sufficient for obtaining general results on the 
minimum energy fluctuation required for a measurement. In order to illustrate
this point, we consider the following example, a so-called standard 
measurement model developed essentially by von Neumann \cite{vNbook}. 
Suppose that a system is a qubit (thus $\hi = \mathbb{C}^2$) and 
an apparatus is a one-particle system on a real line (see FIG. 
\ref{figure_switch11}).
\begin{figure}
\includegraphics[width=9cm]{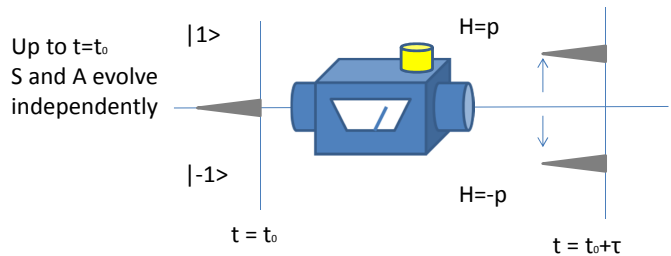}
\caption{A measurement model with $H_A=0$.}
\label{figure_switch11} 
\end{figure}
Set $H_S=\nul$, $H_A=\nul$ and $V=\sigma_z \otimes p$, 
where $p$ represents the momentum operator.
As the meter observable we employ $\{\int_{x\geq 0} dx |x\rangle \langle x|, 
\int_{x <0} dx |x\rangle \langle x|\}$.
For an arbitrary time duration $\tau$, 
if the initial state at time $t=t_0$ of the apparatus is prepared 
in a narrowly-localized state with respect to a position, 
the accuracy of the measurement of $\sigma_z$ 
can be made arbitrarily high. Therefore, there is no 
limitation on the time and energy required for this measurement process.
\par
By looking carefully at this model, one may notice that the
 time duration is given by hand. 
In this model, the interaction between a system and an apparatus is switched on 
at $t=t_0$ and switched off at $t=t_0+\tau$. The dynamics outside this time interval is 
not discussed. An external observer must 
put these systems together at $t=t_0$, which 
until then must have been somehow independent. 
Thus the mechanism initiating 
this switching-on process is supposed to be outside the quantum model.
In this sense, this model is not closed and we need to shift the interface further
to include the on-and-off switching process.
\par
In the discussion below, we develop a general measurement process that also treats  
this switching-on (and off) mechanism quantum mechanically.
In some situations, roughly speaking, we shift the quantum-classical interface so that the quantum side includes 
the experimenter who controls the apparatus. 
In the next section we give a mathematically rigorous formulation 
of the timing device that switches on a perturbation. 
\section{Apparatus as a switching device for an interaction}\label{section3}
As stated above, we investigate an apparatus 
that works not only as 
an information extractor but also as a switching device for an interaction (perturbation). 
To do so, in this section, we study a formulation of the latter condition and 
derive some results.  
We consider an apparatus that works 
as a timing device to switch on an interaction 
at a certain time after $t=t_0$. 
Thus, we assume that at time $t=t_0$ the state of the total 
system is a product state.  
The apparatus is specified by the Hilbert space $\hik$, 
Hamiltonian $H_A$ and a state $\sigma(t_0)$ at time $t=t_0$. 
\par
Let us first 
assume that the apparatus 
is perfectly isolated and has no interaction with the system. 
Then the time evolution is described by a von Neumann equation 
determined by an apparatus Hamiltonian $H_A$ which acts only on 
$\hik$. 
For an arbitrary time $t \in \mathbb{R}$, the state at time $t$ is 
written as 
$\sigma(t) = e^{-i \frac{H_A (t-t_0)}{\hbar}}\sigma(t_0) 
e^{i \frac{H_A (t-t_0)}{\hbar}}$. 
\par
Next, we consider an interaction between 
the system and apparatus. 
We denote the Hilbert space 
of the system
 by $\hi$. The individual dynamics of the system is governed by the 
Hamiltonian $H_S$ defined on $\hi$. 
Thus the total Hamiltonian has three parts, 
$H =H_S + H_A +V$, where $V$ is the interaction 
Hamiltonian that acts on the tensor product $\hi \otimes \hik$.  
The time evolution of a state $\Theta(t)$ (a density operator) over the composite system 
is described by the corresponding von Neumann equation, 
\begin{eqnarray*}
i \hbar \frac{d}{dt} \Theta(t) = [H, \Theta(t)]. 
\end{eqnarray*}
\par
Let us consider conditions for $H_A$, $V$ and a state of the apparatus 
to describe a timing device. 
They must satisfy the
following general conditions
that represent the capability to switch on the interaction  
at a certain time. 
The apparatus 
evolves freely 
up to time $t=t_0$ and  
then switches 
on an interaction with the system at some time after $t=t_0$. 
Thus, the state of the apparatus should be decoupled from that of the system 
until time $t=t_0$. 
The time evolution of the state is determined entirely 
once the state at a certain time is given. We impose the following 
condition on the dynamics and the state at time $t=t_0$: 
\begin{condition}\label{cond1} 
(No interaction up to time $t=t_0$) 
An apparatus (specified by $H_A$, $V$ and a state $\sigma(t_0)$ 
at time $t=t_0$) satisfies {\em the no interaction condition up to time $t=t_0$}
if for any $t\leq t_0$ and for any state $\rho$ of 
the system, 
\begin{eqnarray}
[V, \rho \otimes \sigma(t) ] =0
\label{cond1eq}
\end{eqnarray}
holds, where $\sigma(t):= e^{-i \frac{H_A (t-t_0)}{\hbar}}
\sigma(t_0) e^{i \frac{H_A (t-t_0)}{\hbar}}$.   
\end{condition}
We emphasize that in Condition \ref{cond1}, the left-hand side of (\ref{cond1eq}) 
must be vanishing for any  
$\rho$. This condition implies 
that only after the 
apparatus reaches $\sigma(t_0)$, does $V$ induce the apparatus and the system 
to interact. In fact the following lemma shows that 
these conditions are equivalent.
\begin{lemma}\label{lemmaequiv}
Condition \ref{cond1} is satisfied if and only if 
for any 
state $\rho$ of the system and $t\leq t_0$, 
\begin{eqnarray}
e^{-i \frac{H(t-t_0)}{\hbar}}(\rho \otimes \sigma(t_0) )
e^{i \frac{H(t-t_0)}{\hbar}}
= \rho(t) \otimes \sigma(t)
\label{cond1int} 
\end{eqnarray}
holds, where $\rho(t)$ is defined by
$\rho(t) = e^{-i \frac{H_S (t-t_0)}{\hbar}}
\rho e^{i \frac{H_S (t-t_0)}{\hbar}}.$ 
\end{lemma}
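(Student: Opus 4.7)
The plan is to prove the equivalence by a uniqueness argument for the von Neumann equation on the interval $(-\infty,t_0]$: both sides of (\ref{cond1int}) are candidate solutions of an initial-value problem, so identifying a common equation they both satisfy will force them to agree.

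For the forward direction ($\Rightarrow$), assume Condition \ref{cond1}. I would set $\tilde\Theta(t):=\rho(t)\otimes\sigma(t)$ and differentiate by the Leibniz rule, using the individual von Neumann equations for $\rho(t)$ and $\sigma(t)$, to obtain
\begin{equation*}
i\hbar\frac{d}{dt}\tilde\Theta(t)=[H_S\otimes\id+\id\otimes H_A,\,\tilde\Theta(t)].
\end{equation*}
Since $\rho(t)$ is itself a state of the system for every $t$, Condition \ref{cond1} (applied with $\rho(t)$ in place of the arbitrary system state $\rho$) gives $[V,\tilde\Theta(t)]=0$ for all $t\leq t_0$. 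Therefore $\tilde\Theta(t)$ actually solves the full von Neumann equation with generator $H$ and initial datum $\rho\otimes\sigma(t_0)$. The left-hand side of (\ref{cond1int}) solves the same initial-value problem, so uniqueness of the Hamiltonian flow yields the factorized form.

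For the converse ($\Leftarrow$), assume (\ref{cond1int}) holds on $(-\infty,t_0]$. I would differentiate both sides at an arbitrary $t\leq t_0$; using (\ref{cond1int}) itself to rewrite $e^{-iH(t-t_0)/\hbar}(\rho\otimes\sigma(t_0))e^{iH(t-t_0)/\hbar}$ as $\rho(t)\otimes\sigma(t)$, the derivatives match iff
\begin{equation*}
[H,\,\rho(t)\otimes\sigma(t)]=[H_S\otimes\id+\id\otimes H_A,\,\rho(t)\otimes\sigma(t)],
\end{equation*}
i.e., $[V,\rho(t)\otimes\sigma(t)]=0$. Since the map $\rho\mapsto\rho(t)=e^{-iH_S(t-t_0)/\hbar}\rho\,e^{iH_S(t-t_0)/\hbar}$ is a bijection on the state space for each fixed $t$, the arbitrariness of $\rho$ upgrades this identity to (\ref{cond1eq}) for every state and every $t\leq t_0$.

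No genuine obstacle arises; the argument is a textbook uniqueness argument for the Schr\"odinger/von Neumann flow. The only technical point worth flagging is that $H_S$, $H_A$ and $V$ may well be unbounded, so the time derivatives and commutators above should be interpreted on the natural dense domains supplied by Stone's theorem—this is routine and does not affect the logical structure of either implication.
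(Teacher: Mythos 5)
Your proof is correct and follows essentially the same route as the paper's: differentiate the product form and use the von Neumann equation, with the forward direction resting on the observation that $[V,\rho(t)\otimes\sigma(t)]=0$ makes $\rho(t)\otimes\sigma(t)$ a solution of the full evolution, and the converse obtained by differentiating (\ref{cond1int}). Your explicit remarks on the bijectivity of $\rho\mapsto\rho(t)$ and on domain issues for unbounded generators are small refinements of points the paper passes over silently, not a different argument.
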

\begin{proof}
``If part": 
We differentiate by $t$, 
\begin{eqnarray*}
e^{-i \frac{H(t-t_0)}{\hbar}}(\rho \otimes \sigma(t_0) )
e^{i \frac{H(t-t_0)}{\hbar}}
= \rho(t) \otimes \sigma(t)
\end{eqnarray*}
to obtain 
\begin{eqnarray*}
[H, e^{-i \frac{H(t-t_0)}{\hbar} }
(\rho \otimes \sigma(t_0)) e^{i \frac{H(t-t_0)}{\hbar}}] 
= [H_S + H_A, \rho(t) \otimes \sigma(t)].
\end{eqnarray*}
As the left-hand side is written as 
$[H_S +H_A +V, \rho(t)\otimes \sigma(t)]$,
we obtain 
$[V, \rho(t)\otimes \sigma(t)]=0$. 
As $\rho$ is arbitrary, 
we obtain Condition \ref{cond1}. 
\\
``Only if" part: 
It holds that for $t\leq t_0$, 
\begin{eqnarray*}
i \hbar \frac{d}{dt} 
\rho(t) \otimes \sigma(t)
&=&[ H_S +H_A, \rho(t)\otimes \sigma(t)]
\\
&=& [H, \rho(t)\otimes \sigma(t)], 
\end{eqnarray*}
where we used (\ref{cond1eq}) for $\rho(t)$.
Thus we conclude that $\rho(t)\otimes \sigma(t)$ 
satisfies the von Neumann equation whose integration is 
(\ref{cond1int}). 
\end{proof}
This condition ensures that 
if once the state at an arbitrary time $t_1<t_0$ is prepared 
as $\rho(t_1) \otimes \sigma(t_1)$, 
the composite system evolves independently 
up to time $t=t_0$. 
\begin{lemma}\label{lemma1}
If Condition \ref{cond1} is satisfied and 
at a certain time $t_1 \leq t_0$ the state has a product form
$\Theta(t_1)=\rho(t_1)\otimes \sigma(t_1)$; then  
the state $\Theta(t)$ at any time $t\leq t_0$ becomes the product state; 
\begin{eqnarray*}
\Theta(t)= \rho(t) 
\otimes \sigma(t),  
\end{eqnarray*}
where 
$\rho(t)=
e^{-i \frac{H_S}{\hbar}(t-t_1)} \rho(t_1) e^{ i \frac{H_S}{\hbar} (t-t_1)}$.   
\end{lemma}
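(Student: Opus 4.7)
The plan is to exhibit the product state $\tilde{\Theta}(t) := \rho(t)\otimes \sigma(t)$ as an explicit solution of the full von Neumann equation on the interval $t\leq t_0$, and then appeal to uniqueness to identify it with $\Theta(t)$. Essentially I am re-running the ``only if'' part of Lemma \ref{lemmaequiv} but with the initial moment shifted from $t_0$ to $t_1$; the flexibility to do this comes from the fact that Condition \ref{cond1} is quantified over \emph{all} times $t\leq t_0$ and \emph{all} system states $\rho$.

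First I would differentiate the candidate. Since $\rho(t)$ and $\sigma(t)$ satisfy the individual equations $i\hbar\, \dot\rho(t) = [H_S,\rho(t)]$ and $i\hbar\, \dot\sigma(t) = [H_A,\sigma(t)]$, the Leibniz rule gives
\begin{eqnarray*}
i\hbar \frac{d}{dt}\tilde{\Theta}(t) = [H_S\otimes \id + \id\otimes H_A,\ \tilde{\Theta}(t)].
\end{eqnarray*}
Second, I would invoke Condition \ref{cond1} applied to the particular system state $\rho(t)$, which is a legitimate density operator because it is just a unitary conjugation of $\rho(t_1)$. This gives $[V,\tilde{\Theta}(t)] = 0$ for every $t\leq t_0$, so adding this vanishing term to the commutator above yields
\begin{eqnarray*}
i\hbar \frac{d}{dt}\tilde{\Theta}(t) = [H,\ \tilde{\Theta}(t)],\qquad t\leq t_0.
\end{eqnarray*}
Thus $\tilde{\Theta}$ satisfies the total von Neumann equation on $(-\infty,t_0]$.

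Third, I would conclude by uniqueness. At $t=t_1$ we have $\tilde{\Theta}(t_1) = \rho(t_1)\otimes \sigma(t_1) = \Theta(t_1)$ by hypothesis, and the von Neumann equation generated by the self-adjoint operator $H$ has a unique solution for each initial datum (Stone's theorem: the solution is just conjugation by $e^{-iH(t-t_1)/\hbar}$, both forward and backward in time). Therefore $\Theta(t)=\tilde{\Theta}(t)=\rho(t)\otimes \sigma(t)$ for all $t\leq t_0$, as required. There is no real obstacle here; the only point that deserves care is the substitution $\rho \mapsto \rho(t)$ in Condition \ref{cond1}, and that is legitimate precisely because Condition \ref{cond1} is required to hold for every system state.
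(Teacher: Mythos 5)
Your proof is correct and follows essentially the same route as the paper: the paper derives Lemma \ref{lemma1} as an immediate consequence of Lemma \ref{lemmaequiv}, whose ``only if'' proof is precisely the argument you give --- show that $\rho(t)\otimes\sigma(t)$ solves the full von Neumann equation by invoking Condition \ref{cond1} for the evolved state, then integrate/appeal to uniqueness --- merely anchored at $t_0$ instead of $t_1$. Your explicit remark that the substitution $\rho\mapsto\rho(t)$ is licensed because Condition \ref{cond1} is quantified over all system states is exactly the point the paper relies on.
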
 
\begin{proof}
It follows immediately from Lemma \ref{lemmaequiv}.
\end{proof}
It is easy to see that a trivial interaction $V=0$ satisfies Condition \ref{cond1}.
The following condition is imposed to avoid such a trivial interaction.  
\begin{condition}\label{cond2}
(Non-triviality condition) 
An apparatus satisfies {\em the non-triviality condition} if   
there exists a state $\rho$ and a time $t>t_0$ such that 
\begin{eqnarray*}
[V, e^{-i \frac{H(t-t_0)}{\hbar} }(\rho \otimes \sigma(t_0))
e^{i \frac{H(t-t_0)}{\hbar}}]\neq 0. 
\end{eqnarray*}
\end{condition}
Note that this condition is rather weak.  
It still allows the existence of 
$\rho$ such that 
$V$ acts trivially 
on $\rho(t)\otimes \sigma(t)$ for any $t\in \mathbb{R}$. 
In addition, this condition does not specify exactly the switching time of the 
perturbation 
as $t=t_0$. It only requires the 
existence of the moment $(t>t_0)$ at which the perturbation does not vanish. 
\par
It is possible to further impose a condition on the switching-off process of an interaction. 
\begin{condition}\label{cond3}(Switching-off time condition) 
An apparatus satisfies {\em the switching-off time condition} if 
there exists a time $t_1 >t_0 $ called a switching off time such that 
for any $t\geq t_1$ and for any state $\rho $ of 
the system, 
\begin{eqnarray}
[V, e^{-i \frac{H (t-t_0)}{\hbar}} (\rho \otimes \sigma(t_0))
e^{i \frac{H (t-t_0)}{\hbar}}] =0  
\label{cond3eq}
\end{eqnarray}
holds. 
\end{condition}
One may consider a weaker definition for the switching-off time 
that depends on an initial state of the system. 
However, in this paper we do not use the switching-off condition. 
\begin{example}\label{ex1}
Suppose that an apparatus is described by 
$\hik=L^2(\mathbb{R})$ and 
$H_A =p:=-i \hbar \frac{d}{dx}$. 
It is coupled with a system described by 
$\hi=\mathbb{C}^2$ with orthonormal basis $\{|0\rangle, |1\rangle\}$ and 
a trivial Hamiltonian 
$H_S=0$. 
The interaction term is introduced as 
$V=\sigma_z\otimes g(q)$, 
where 
$\sigma_z=|1\rangle \langle 1|-|0\rangle \langle 0|$ and 
$g$ is a nonvanishing real function 
whose support is included in $(0, \Delta )$ for some $\Delta >0$. 
In the position representation $|\phi (0)\rangle$ is supposed to be strictly localized in 
the negative real line.  
That is, the support of $\phi(t=0, x):=\langle x|\phi(0)\rangle$ is included in 
$(-\delta,0)$ for some $\delta>0$. 
The freely evolved state of the apparatus can be written as 
$\phi(t,x)=\phi(0,x-t)$, which has support in $(-\delta+t, t)$. 
It is easy to see that this system-plus-apparatus satisfies Condition \ref{cond1}. 
\par
Let us consider the time evolution of the state 
denoted by 
$|\Omega(0)\rangle \otimes |\phi(0)\rangle$.
For $|\Omega(0)\rangle =|0\rangle $, 
while the state of the system remains $|0\rangle$, 
the state of the apparatus is affected by the interaction. 
We denote the 
whole state at time $t$ 
 by $|0\rangle \otimes |\phi^0(t)\rangle$. 
It can be shown that at time $t$, the state of the apparatus
$|\phi^0(t)\rangle$ becomes 
$\phi^0(t,x)=e^{-\frac{i}{\hbar}\int^x_0 dx' g(x')}\phi(t,x)$ in the position 
representation. 
On the other hand, 
for $|\Omega(0)\rangle =|1\rangle $, 
while the state of the system remains $|1\rangle$, 
the state $|\phi^1(t)\rangle$ of the apparatus at time $t$  
becomes 
$\phi^1(t,x)=e^{\frac{i}{\hbar}\int^x_0 dx' g(x')}\phi(t,x)$ in the position 
representation. 
Thus, for a general initial state 
$|\Omega(0)\rangle = c_0 |0\rangle + c_1|1\rangle$, 
the state of the total system evolves as
\begin{eqnarray*}
|\Phi(t)\rangle = c_0 |0\rangle \otimes |\phi^0(t)\rangle
+c_1 |1\rangle \otimes |\phi^1(t)\rangle, 
\end{eqnarray*}
which coincides with an unperturbed (freely evolved) state 
up to time $t=0$. Thus it satisfies Condition \ref{cond1}. 
The state of the system evolves as  
\begin{eqnarray*}
\rho(t)=
|c_0|^2 
|0\rangle \langle 0|+
|c_1|^2 |1\rangle \langle 1|
+c_0 \overline{c_1}|0\rangle \langle 1 | \langle \phi^1(t)|\phi^0(t)\rangle
+
c_1 \overline{c_0}
|1\rangle \langle 0 |\langle \phi^0(t)|\phi^1(t)\rangle, 
\end{eqnarray*}
which does not agree, in general, with the freely 
evolved state 
\begin{eqnarray*}
\rho^0(t)=
|c_0|^2 |0\rangle \langle 0|+
|c_1|^2
|1\rangle \langle 1|
+c_0 \overline{c_1}
|0\rangle \langle 1|+
c_1\overline{c_0}|1\rangle \langle 0|
\end{eqnarray*}
 for $t>0$.  
Thus Condition \ref{cond2} is also satisfied.
In addition, we put $\tau =\Delta + \delta$. 
Taking into account the support of $\phi(t,x)$, 
we obtain for $t>\tau$, 
\begin{eqnarray*}
\phi^0(t,x)=e^{-\frac{i}{\hbar} \int^{\Delta}_0 dx' g(x')}
\phi(t,x)\\
\phi^1(t,x)=e^{\frac{i}{\hbar}\int^{\Delta}_0 dx' g(x')}\phi(t,x).
\end{eqnarray*}
As their supports do not intersect with 
the support of $g$, Condition \ref{cond3} is satisfied. 
\end{example}
The total Hamiltonian in the above example has unbounded spectrum $\mathbb{R}$. This unboundedness  must be satisfied in general. 
\begin{theorem}
Let us consider a model satisfying both Condition \ref{cond1} and 
Condition \ref{cond2}. 
The spectrum of the total Hamiltonian $H= H_S +H_A +V$ 
is two-side unbounded (lower and upper unbounded). 
\end{theorem}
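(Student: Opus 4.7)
The plan is to argue by contradiction, assuming the spectrum of $H$ is bounded from below; the upper-bound case is handled symmetrically by replacing $H$ with $-H$. The strategy combines the eigenvector restriction that Condition~\ref{cond1} imposes on the evolving vector $|\psi(t)\rangle=e^{-iH(t-t_0)/\hbar}|\phi_0\otimes\chi_0\rangle$ with the analytic continuation of this vector to the lower complex half plane (available precisely because $H$ is bounded below), in order to propagate the triviality of $V$ on that vector from $(-\infty,t_0]$ to all of $\mathbb{R}$, contradicting Condition~\ref{cond2}. I would first reduce to pure $\sigma=|\chi_0\rangle\langle\chi_0|$ (a mixed $\sigma$ is handled analogously on its support, or by a purification) and to pure $\rho=|\phi_0\rangle\langle\phi_0|$ (using linearity of $[V,\,\cdot\,]$ in $\rho$).

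For such a pure product initial state, Condition~\ref{cond1} reads $[V,|\psi(t)\rangle\langle\psi(t)|]=0$ for $t\leq t_0$, equivalent to $V|\psi(t)\rangle=c(t)|\psi(t)\rangle$ for some real scalar $c(t)$. Lemma~\ref{lemma1} identifies $|\psi(t)\rangle$ with the product $|\phi_0(t)\otimes\chi_t\rangle$ up to a phase, and varying $\phi_0$ linearly in the pure version of Condition~\ref{cond1} shows that the same scalar $c(t)$ serves as eigenvalue of $V$ on every vector of $\mathcal{H}\otimes|\chi_t\rangle$. The function $c(t)$ is continuous in $t$ (from strong continuity of $|\chi_t\rangle$ together with boundedness of $V$) and takes values in the point spectrum of $V$, which is at most countable and hence totally disconnected as a subset of $\mathbb{R}$. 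Continuity on the connected interval $(-\infty,t_0]$ therefore pins $c(t)$ to a single constant $c_0$.

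Finally I would invoke the analytic continuation. Since $H$ is bounded below, $t\mapsto|\psi(t)\rangle$ extends to a bounded holomorphic $\mathcal{H}\otimes\mathcal{K}$-valued function on the open lower half plane $\{\mathrm{Im}\,t<0\}$ with strongly continuous boundary values on $\mathbb{R}$; so does $F(t):=(V-c_0\mathbf{1})|\psi(t)\rangle$, which by the preceding step vanishes on the whole half-line $(-\infty,t_0]$. Pairing $F$ with any fixed $|\alpha\rangle\in\mathcal{H}\otimes\mathcal{K}$ yields a bounded holomorphic function on the lower half plane whose continuous boundary value vanishes on a half-line; Schwarz reflection (equivalently, Luzin--Privalov uniqueness for $H^{\infty}$ boundary values) then forces it to vanish identically. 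Letting $|\alpha\rangle$ range over a total set gives $F\equiv 0$ on $\mathbb{R}$, so $V|\psi(t)\rangle=c_0|\psi(t)\rangle$ for every $t$, and thus $[V,|\psi(t)\rangle\langle\psi(t)|]=0$ throughout time; by linearity in $\rho$ this contradicts Condition~\ref{cond2}. The main technical hurdle is the middle paragraph---establishing that the candidate eigenvalue $c(t)$ is both well-defined as a single scalar on $\mathcal{H}\otimes|\chi_t\rangle$ and genuinely constant on $(-\infty,t_0]$; once $c_0$ is secured, the analytic continuation argument in the last paragraph is essentially routine.
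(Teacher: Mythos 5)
Your proof is correct and follows essentially the same route as the paper: assume $H$ is bounded below, purify $\sigma(t_0)$, use Condition \ref{cond1} to turn the evolved product vector into an eigenvector of $V$ for all $t\le t_0$, and then exploit the analyticity of $z\mapsto e^{-iHz/\hbar}$ on the lower half plane (available precisely because $H$ is lower bounded) together with Schwarz reflection to propagate the eigenvector relation to all real $t$, contradicting Condition \ref{cond2}. The one genuine divergence is your middle paragraph, where you prove the eigenvalue $c(t)$ is a single constant $c_0$ on $(-\infty,t_0]$ via continuity of $c(t)$ plus countability (hence total disconnectedness) of the point spectrum of a self-adjoint operator on a separable space. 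The paper neither establishes nor needs constancy of $c(t)$: it subtracts the time-dependent matrix element $\langle \Omega\otimes\phi(t_0)|Ve^{-i\frac{H}{\hbar}(t-t_0)}|\Omega\otimes\phi(t_0)\rangle$, which is itself the boundary value of a function analytic in the lower half plane, so the difference $f_{\Psi,\Omega}$ is analytic and vanishes on the half-line with no claim about the eigenvalue at all. The paper's device is slightly more economical, since it avoids the extra hypotheses your step quietly uses (separability, strong continuity plus boundedness of $V$ to get continuity of $c(t)$); your version buys the cleaner intermediate statement that $\hi\otimes|\chi_t\rangle$ is an eigenspace of $V$ with one fixed eigenvalue $c_0$ for all $t\le t_0$. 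The analytic-continuation endgame, and the contradiction with Condition \ref{cond2}, are then identical.
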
 
\begin{proof}
Suppose that $H$ is lower bounded. (An upper bounded case can be treated similarly.) 
By purifying the state $\sigma(t_0)$ of the apparatus, 
we can assume the state $\sigma(t_0)$ to be pure. 
We denote the purified state by $\sigma(t_0)=|\phi(t_0)\rangle \langle \phi(t_0)|$. 
On this enlarged system, 
$H$ also works as a Hamiltonian and is lower bounded. 
(More precisely we need $H\otimes \id$, where $\id$ acts only on 
an auxiliary Hilbert space employed for purification.) 
We denote by $\hik$ again the enlarged Hilbert space of 
the apparatus.   
For an arbitrary state $|\Omega\rangle \in \hi$, 
we have for $t\leq t_0$, 
\begin{eqnarray*}
[V, e^{-i \frac{H}{\hbar}(t-t_0)}
|\Omega \otimes  \phi(t_0)\rangle 
\langle \Omega \otimes \phi(t_0) |
e^{i \frac{H}{\hbar}(t-t_0)}]=0. 
\end{eqnarray*}
This implies that for $t\leq t_0$ there exists a function $c(t)$ such that 
\begin{eqnarray*}
Ve^{-i \frac{H}{\hbar}(t-t_0)}|\Omega \otimes \phi(t_0)\rangle 
= c(t) |\Omega \otimes \phi(t_0)\rangle. 
\end{eqnarray*}
As $|\Omega \otimes \phi(t_0)\rangle $ is normalized we have 
\begin{eqnarray*}
Ve^{-i \frac{H}{\hbar}(t-t_0)}|\Omega \otimes \phi(t_0)\rangle 
= |\Omega \otimes \phi(t_0)\rangle 
\langle \Omega \otimes \phi(t_0)|V e^{-i \frac{H}{\hbar}(t-t_0)}
|\Omega \otimes \phi(t_0)\rangle. 
\end{eqnarray*}
For an arbitrary $|\Psi\rangle \in \hi\otimes \hik$ 
 we now define  
\begin{eqnarray*}
f_{\Psi,\Omega}(t):=\langle \Psi|
\left( V e^{-i \frac{H}{\hbar}(t-t_0)}
- \langle \Omega \otimes \phi(t_0) |Ve^{-i \frac{H}{\hbar}
(t-t_0)} |\Omega \otimes \phi(t_0)\rangle \right) 
|\Omega \otimes \phi(t_0)\rangle.  
\end{eqnarray*}
This is vanishing for $t\leq t_0$. 
As $H$ is lower bounded,  
$f_{\Psi,\Omega}(z):=\langle \Psi|
\left(Ve^{-i \frac{H_c}{\hbar}(z-t_0)} 
-  \langle \Omega \otimes \phi(t_0) |Ve^{-i \frac{H}{\hbar}
(z-t_0)} |\Omega \otimes \phi(t_0)\rangle \right) 
|\Omega \otimes \phi(0)\rangle $ can be defined for $Im(z)\leq 0$ and is analytic 
for $Im(z)<0$ \cite{Hergerfeldt}. 
The Schwarz reflection principle \cite{Ahlforsbook} 
implies that $f_{\Psi,\Omega}$ can be 
extended to an analytic function on 
$\mathbb{C}\setminus \{s|s > 0\}$. 
Because $f_{\Psi, \Omega}(t)=0$ on $t\leq t_0$, it follows 
that $f_{\Psi,\Omega}(t)=0$ for $t\in \mathbb{R}$.  
That is, for all $t\in \mathbb{R}$, 
\begin{eqnarray*}
Ve^{-i \frac{H}{\hbar}(t-t_0)}|\Omega \otimes \phi(t_0)\rangle 
= |\Omega \otimes \phi(t_0)\rangle 
\langle \Omega \otimes \phi(t_0)|V e^{-i \frac{H}{\hbar}(t-t_0)}
|\Omega \otimes \phi(t_0)\rangle 
\end{eqnarray*}
holds. This contradicts Condition \ref{cond2}. 
\end{proof}
The unbounded character of the total Hamiltonian implies that 
the total system is an unstable system or an infinite system. 
In the latter case, the Hamiltonian is the generator of time evolution 
referring to a state through the GNS representation \cite{Haagbook}. 
We present a more ``physical" example whose Hamiltonian 
is bounded from below.  In this example, the conditions are relaxed 
to yield small errors. 
\begin{example}\label{example:real}
Suppose that an apparatus is described by a one-particle 
Hilbert space $\hik = L^2(\mathbb{R})$. 
We consider a ``real" single particle which has 
Hamiltonian $H_A=\frac{p^2}{2m}=-\frac{\hbar^2}{2m}\frac{d^2}{dx^2}$ 
with a particle mass $m >0$.  
This $H_A$ is lower-bounded. 
We set $t_0 =0$ and some large $T\gg 0$.  
We set $|\phi(-T)\rangle$ in its position representation as
\begin{eqnarray*}
\phi(-T, x) = \frac{1}{\pi^{1/4}\sqrt{\sigma}}
\exp\left(
i k(x-x_0) - \frac{(x-x_0)^2}{2\sigma^2}
\right), 
\end{eqnarray*}
where $k >0$ and $x_0$ 
is chosen to satisfy 
$-x_0 = \frac{\hbar k}{m}(T+\Delta)$ for some fixed $\Delta >0$. 
Then it evolves according to the free Hamiltonian $H_A=\frac{p^2}{2m}$ as 
\begin{eqnarray*}
\phi(s-T, x)=
\frac{1}{\pi^{1/4} \sqrt{\sigma} 
\left( 1+\frac{i s\hbar}{\sigma^2 m}\right)}
\exp\left(
\frac{ - \frac{(x-x_0)^2}{2\sigma^2}+
ik(x-x_0) - \frac{i s \hbar k^2}{2m}}
{1 + \frac{is\hbar}{\sigma^2 m}}
\right). 
\end{eqnarray*} 
It gives rise to the probability distribution, 
for $v_g :=\hbar k/m$, 
\begin{eqnarray*}
|\phi(s-T,x)|^2 
= \frac{\pi^{1/2}}{\sigma 
\left(1+\frac{s^2 \hbar^2}{\sigma^4 m^2}
\right)}
\exp\left(
\frac{ - \frac{1}{\sigma^2}\left(x-x_0 -v_g s
\right)^2}
{1 +\frac{s^2\hbar^2}{\sigma^4 m^2}
} 
\right), 
\end{eqnarray*}
which gives an expectation and 
variance of $q$ as 
\begin{eqnarray}
\langle q(s-T)\rangle 
&=&x_0 +v_g s \label{av} \\
\Delta q(s-T)&:=&
\sqrt{\langle q(s-T)^2\rangle - \langle q(s-T)\rangle^2} = \sigma\left(
1 + \frac{\hbar^2 s^2}{\sigma^4 m^2}
\right)^{1/2}. \label{var}
\end{eqnarray}
Let us consider a composite system 
$\hi \otimes \hik$, where $\hi$ is a Hilbert space 
with a Hamiltonian $H_S$. 
The total Hamiltonian is assumed to have a form, 
\begin{eqnarray*}
H = H_S + H_A + B \otimes V(q),   
\end{eqnarray*}
where $B$ is a self-adjoint operator 
and $supp V(\cdot) \subset (0,\infty)$. 
One can see that for sufficiently large $t$ 
the interaction term gives a non-trivial effect. 
In the following we show that up to time $t_0=0$ the states 
evolve almost freely. 
We estimate, for $s\leq T$,  
\begin{eqnarray*}
\Vert 
e^{-i \frac{Hs}{\hbar} }|\xi\otimes \phi(-T)\rangle 
-  |\xi(t)\otimes \phi(s-T)\rangle\Vert
&=& 
\Vert |\xi\otimes \phi(-T)\rangle 
- W(s) |\xi\otimes \phi(-T)\rangle\Vert
\\
&\leq& 
\int^s_0 dt \left\Vert \frac{d}{dt}W(t)|\xi\otimes \phi(-T)\rangle 
\right\Vert, 
\end{eqnarray*}
where $W(t)= e^{i \frac{Ht}{\hbar}}
e^{-i \frac{H_0 t}{\hbar}}$ with $H_0 =H_S +H_A$. 
We obtain 
\begin{eqnarray*}
\frac{d}{dt}W(t) |\xi\otimes \phi(-T)\rangle 
= \frac{i}{\hbar}W(t) e^{i \frac{H_0 t}{\hbar}} (B\otimes V(q)) 
e^{-i \frac{H_0 t}{\hbar}}|\xi \otimes \phi(-T)\rangle. 
\end{eqnarray*}
Defining a projection $P_{\geq}$ by 
$(P_{\geq}\psi)(x)= \left\{ \begin{array}{cc} 
\psi(x)& (x\geq 0) \\
0 & (x < 0) \\
\end{array}
\right.$,  
we have 
\begin{eqnarray*}
e^{i \frac{H_0 t}{\hbar}}(B\otimes  V(q)) 
e^{-i \frac{H_0 t}{\hbar}}|\xi \otimes \phi(-T)\rangle 
= 
e^{i \frac{H_0 t}{\hbar}}(B\otimes V(q)) (\id \otimes P_{\geq})
e^{-i \frac{H_0 t}{\hbar}}|\xi \otimes \phi(-T)\rangle, 
\end{eqnarray*}
and 
\begin{eqnarray*}
&& \Vert 
e^{i \frac{H_0 t}{\hbar}}(B\otimes V(q)) 
e^{-i \frac{H_0 t}{\hbar}}|\xi \otimes \phi(-T)\rangle 
\Vert 
\leq \Vert B \otimes V(q) \Vert 
\Vert (\id \otimes P_{\geq}) e^{-i \frac{H_0 t}{\hbar}}|\xi \otimes \phi(-T)\rangle 
\Vert
\\
&=&
\Vert B \Vert \Vert V(q)\Vert 
\Vert P_{\geq}e^{- \frac{H_A t}{\hbar}} |\phi(-T)\rangle\Vert.
\end{eqnarray*}
We can estimate  
$\Vert P_{\geq} |\phi(t-T) \rangle \Vert^2 
= \int^{\infty}_0 |\phi(t-T, x)|^2$. 
We use (\ref{av}), (\ref{var}) and the
Chebyshev inequality to give, for $t\leq T$,  
\begin{eqnarray*}
\Vert P_{\geq} |\phi(t-T) \rangle \Vert^2 
\leq \frac{
\sigma^2 \left( 1 + \frac{\hbar^2 t^2}{\sigma^4 m^2}\right)}
{(x_0 +v_g t)^2}
\leq \frac{\sigma^2 \left(1+\frac{\hbar^2 t^2}{\sigma^4 m^2}\right)}
{(v_g \Delta)^2}, 
\end{eqnarray*}
which can be made arbitrarily small as 
$k \to \infty$. 
Thus, we have
\begin{eqnarray*}
\Vert 
e^{-i \frac{Hs}{\hbar} }|\xi\otimes \phi(-T)\rangle 
-  |\xi(t)\otimes \phi(s-T)\rangle\Vert
\leq
\Vert B\Vert \Vert V(q)\Vert 
\frac{\sigma}{v_g \Delta} 
\left( s + \frac{\hbar^2}{3 \sigma^4 m^2} s^3
\right),
\end{eqnarray*}
which can be arbitrary small for $k\to \infty$. 
Thus, up to time $t_0=0$, the interaction can be made 
negligibly small. 
%
\end{example} 
\section{Energy-Time Uncertainty Relation I: Energy of Apparatus}
\label{section4}
In this section we investigate how much energy is required  
to realize a measurement process satisfying Condition \ref{cond1}. More precisely, 
we study the energy fluctuation $\Delta H_A$ of the apparatus Hamiltonian 
required for a measurement within  
a measurement time duration $\tau$, and derive the  
energy-time uncertainty relation between them. 
\par
We treat a closed composite system consisting of
 a system and a measuring apparatus whose total Hilbert space is 
$\hi \otimes \hik$. 
The dynamics is governed by a total Hamiltonian 
$H= H_S + H_A + V$, 
where $H_S$ (respectively $H_A$) is the system Hamiltonian 
acting on $hi$ (resp. $\hik$) and 
$V$ is the interaction.
The apparatus must be large enough to include 
the switching-on process. That is, the model with an ``initial" state 
$\sigma(t_0)$ of 
the apparatus at time $t=t_0$ satisfies Condition \ref{cond1}. 
In addition, the model is assumed to describe a measurement process 
of a sharp observable i.e., a PVM $\P=\{\P_n\}_{n \in \Omega_{\P}}$ with a measurement 
time duration $\tau >0$. 
That is, there exists a POVM $\E=\{\E_n\}_{n \in \Omega_{\P}}$ 
acting on $\hik$ such that 
it holds that for an arbitrary state $\rho$ of the system, 
\begin{eqnarray*}
\mbox{tr}[\rho \P_n] 
= \mbox{tr}[e^{-i \frac{H}{\hbar}\tau} 
(\rho \otimes \sigma(t_0))e^{i \frac{H}{\hbar}{\tau}}(\id \otimes \E_n)]. 
\end{eqnarray*}  
\par
Two observations play roles in deriving a result. 
The first is a so-called information-disturbance trade-off 
relation in the quantum measurement process. 
In quantum mechanics, 
any non-trivial measurement process causes 
disturbance of quantum states. 
This property 
is directly related to the uncertainty relation of joint measurement. 
In fact, the disturbance caused by a measurement of an observable 
spoils the subsequent measurements. 
In particular, a perfectly accurate measurement of an observable 
implies a complete distortion of the conjugate states. 
The second observation, which is derived in this section for the 
first time, is a kind of energy-time uncertainty relation 
between the energy fluctuation of an apparatus and the time duration 
required to disturb a state of the system. 
This relation is proved by applying the Mandelstam-Tamm 
uncertainty relation to a timing device satisfying Condition \ref{cond1}.  
Combining these two observations, we derive a trade-off 
inequality between the time duration of the measurement process 
and the energy fluctuation of the apparatus. 
\par
We begin with the latter problem: how large an energy 
fluctuation is required to disturb a state. 
We examine the behavior of states of a system 
interacting with a timing device. 
We consider the 
time evolution of 
$
 \Theta_0(t_0):=
\rho(t_0)\otimes \sigma(t_0)$, 
where $\rho(t_0)$ is a state of the system at time $t=t_0$. 
Its evolved state is denoted by $\Theta_0(t)=e^{-i\frac{H}{\hbar}(t-t_0)
}\Theta_0(t_0)
e^{i\frac{H}{\hbar}(t-t_0)}$, and its restricted state on the system 
is written as $\rho(t):=\mbox{tr}_{\hik}\Theta_0(t)$, 
where $\mbox{tr}_{\hik}$ represents 
a partial trace.  
This state $\rho(t)$ should be compared to  
the freely evolved (imaginary) state $\rho^0(t):=e^{-i\frac{H_S}{\hbar}(t-t_0)}
\rho(t_0)e^{i\frac{H_S}{\hbar}(t-t_0)}$. 
\par 
We employ a quantity, called fidelity, to quantify the deviation.  
The fidelity between the two states $\rho_0$ and $\rho_1$ is defined by 
$F(\rho_0,\rho_1):=\mbox{tr}(\sqrt{\rho_0^{1/2}\rho_1\rho_0^{1/2}})$. 
It satisfies $0\leq F(\rho_0,\rho_1)\leq 1$, and 
$F(\rho_0,\rho_1)=1$ holds if and only if $\rho_0=\rho_1$. 
 (See for example, \cite{Nielsen}.) 
\par 
Now we consider a timing device which satisfies Condition \ref{cond1}.  
The energy fluctuation of the apparatus at $t=t_0$  
\begin{eqnarray*}
\Delta H_A :=\sqrt{\mbox{tr}[\sigma(t_0) H_A^2] -
\mbox{tr}[\sigma(t_0) H_A]^2}
\end{eqnarray*}
plays a central role in the following theorems.
Because up to time $t=t_0$ the apparatus evolves only by 
$H_A$, the energy fluctuation is constant for $t\leq t_0$.  
This quantity characterizes the speed of the time evolution for the isolated apparatus. 
Because of Condition \ref{cond1}, the state of the apparatus evolves up to time $t=t_0$ 
as if it is isolated.  
Thus $\sigma(t)$, a state of the apparatus for $t\leq t_0$, is written as 
$\sigma(t) = e^{-i \frac{H_A}{\hbar}(t-t_0)}\sigma(t_0) 
e^{i \frac{H_A}{\hbar}(t-t_0)}$.  
The well-known Mandelstam-Tamm time energy uncertainty 
relation \cite{Mandelstam-Tamm, Busch1, BuschTEUR, Pfeifer} 
states that for any normalized vector 
$|\phi(t)\rangle =e^{-i 
\frac{H_A}{\hbar}(t-t_0)}|\phi(t_0)\rangle \in \hik$, 
their overlap is bounded as 
\begin{eqnarray*}
|\langle \phi(t)| \phi(t_0)\rangle | 
\geq  \cos \left( \frac{\Delta H_A (t-t_0)}{\hbar}
\right) 
\end{eqnarray*}
for $|t-t_0| \leq \frac{\pi \hbar}{2 \Delta H_A}$.  
\begin{lemma}
For $t$ satisfying $0\leq t_0 -t \leq \frac{\pi \hbar}{2\Delta H_s}$, 
the fidelity between $\sigma(t)$ and $\sigma(t_0)$ is bounded as 
\begin{eqnarray}
F(\sigma(t), \sigma(t_0))
\geq \cos\left(
\frac{\Delta H_A (t-t_0)}{\hbar}
\right). 
\label{Mandelstam}
\end{eqnarray}
\end{lemma}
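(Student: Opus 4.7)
The statement extends the Mandelstam--Tamm inequality, which is formulated for pure states, to the (generally mixed) state $\sigma(t_0)$ of the apparatus. My plan is to reduce to the pure case by purification and then invoke Uhlmann's theorem to transfer the overlap bound to a fidelity bound.

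First, I would fix the relevant dynamics. By Condition \ref{cond1} together with Lemma \ref{lemma1}, for $t \leq t_0$ the apparatus evolves as if isolated, so $\sigma(t) = e^{-iH_A(t-t_0)/\hbar}\sigma(t_0) e^{iH_A(t-t_0)/\hbar}$. Introduce an auxiliary Hilbert space $\hik'$ and a purification $|\phi(t_0)\rangle \in \hik \otimes \hik'$ of $\sigma(t_0)$, i.e.\ $\mbox{tr}_{\hik'}|\phi(t_0)\rangle\langle \phi(t_0)| = \sigma(t_0)$. Define the enlarged Hamiltonian $\widetilde H_A := H_A \otimes \id$ and set $|\phi(t)\rangle := e^{-i \widetilde H_A (t-t_0)/\hbar}|\phi(t_0)\rangle$. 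Since the purifying factor is acted on trivially, $|\phi(t)\rangle$ is a purification of $\sigma(t)$ for every $t \leq t_0$.

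Next I would apply the Mandelstam--Tamm inequality stated in the excerpt to the pure, freely evolving vector $|\phi(t)\rangle$: for $|t-t_0| \leq \pi \hbar/(2\,\Delta \widetilde H_A)$,
\begin{eqnarray*}
|\langle \phi(t)|\phi(t_0)\rangle| \geq \cos\!\left(\frac{\Delta \widetilde H_A\,(t-t_0)}{\hbar}\right).
\end{eqnarray*}
A short calculation shows that the energy fluctuation is preserved under the chosen purification: $\langle \phi(t_0)|\widetilde H_A |\phi(t_0)\rangle = \mbox{tr}[\sigma(t_0) H_A]$ and analogously for the second moment, so $\Delta \widetilde H_A = \Delta H_A$. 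Finally, Uhlmann's theorem gives
\begin{eqnarray*}
F(\sigma(t),\sigma(t_0)) \geq |\langle \phi(t)|\phi(t_0)\rangle|,
\end{eqnarray*}
because $|\phi(t)\rangle$ and $|\phi(t_0)\rangle$ are particular purifications of $\sigma(t)$ and $\sigma(t_0)$ respectively. Chaining the two inequalities yields the claim on the stated time interval.

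The main obstacle I anticipate is not any deep estimate but the careful bookkeeping around purification: one must check that the purification can be chosen so that the enlarged free evolution really does purify $\sigma(t)$ pointwise in $t$ (achieved by letting $\widetilde H_A$ act trivially on $\hik'$), and that the variance of $\widetilde H_A$ in $|\phi(t_0)\rangle$ coincides with the variance of $H_A$ in $\sigma(t_0)$. Once these two compatibility facts are in hand, the proof is essentially the Mandelstam--Tamm bound plus Uhlmann's theorem.
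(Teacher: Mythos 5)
Your argument is correct and is essentially the paper's own proof: the author likewise purifies $\sigma(t_0)$, evolves the purification under $H_A\otimes\id$, notes that $\Delta(H_A\otimes\id)=\Delta H_A$, applies the Mandelstam--Tamm bound to the pure overlap, and uses Uhlmann's characterization of fidelity as the supremum of overlaps over purifications to obtain the lower bound. The only difference is that you spell out the bookkeeping (that $|\phi(t)\rangle$ purifies $\sigma(t)$ pointwise and that the variance is preserved) which the paper leaves implicit.
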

\begin{proof}
The fidelity is the maximum overlap 
between the purified states as $F(\sigma(t), \sigma(t_0))
= \sup |\langle \phi(t) |\phi(t_0)\rangle|$, and the 
above inequality can be obtained for the fidelity by using $\Delta H_A 
= \Delta (H_A \otimes \id)$, where $\Delta(H_A \otimes \id)$ is 
the energy fluctuation of the purified state $|\phi(t_0)\rangle$ 
of $\sigma(t_0)$.
\end{proof}  
The deviation of the perturbed state, measured by the fidelity, can be 
bounded by the following simple argument. 
\begin{theorem}\label{ETtheorem}
Let us consider a process that satisfies Condition \ref{cond1}.
$F(\rho(t), \rho^0(t))$ represents the fidelity between $\rho(t)$ and 
the freely evolved state $\rho^0(t)$. 
It holds that for $0\leq t-t_0\leq \frac{\pi \hbar}{2\Delta H_A }$,   
\begin{eqnarray*}
\cos \left( \frac{\Delta H_A (t-t_0)}{\hbar}\right) \leq F(\rho(t), \rho^0(t)). 
\end{eqnarray*}
\end{theorem}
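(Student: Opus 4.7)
The plan is to reduce the statement to the preceding Mandelstam--Tamm lemma by chaining together two standard properties of the fidelity, namely monotonicity under CPTP maps and invariance under unitaries, with Lemma \ref{lemmaequiv} (Condition \ref{cond1}) supplying the key simplification in the middle. Write $s=t-t_0>0$ and $\Xi:=\rho(t_0)\otimes\sigma(t_0)$, so that $\Theta_0(t_0+s)=e^{-iHs/\hbar}\Xi e^{iHs/\hbar}$ and $\rho(t_0+s)=\mathrm{tr}_{\hik}\Theta_0(t_0+s)$.

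First, I would apply monotonicity of fidelity under the partial trace $\mathrm{tr}_{\hik}$, choosing the freely evolved apparatus state $\sigma(t_0)$ (not $\sigma(t_0+s)$) as the apparatus reference:
\begin{eqnarray*}
F(\rho(t_0+s),\rho^0(t_0+s)) \;\geq\; F\bigl(\Theta_0(t_0+s),\,\rho^0(t_0+s)\otimes\sigma(t_0)\bigr).
\end{eqnarray*}
Next I invoke unitary invariance of the fidelity with the unitary $e^{iHs/\hbar}$, which turns the left entry back into $\Xi$ and the right entry into $e^{iHs/\hbar}(\rho^0(t_0+s)\otimes\sigma(t_0))e^{-iHs/\hbar}$. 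This is where the cleverness of the choice of apparatus reference pays off: the right entry has the form ``$\rho'\otimes\sigma(t_0)$ evolved backward by $s$'' with $\rho'=\rho^0(t_0+s)=e^{-iH_Ss/\hbar}\rho(t_0)e^{iH_Ss/\hbar}$. Lemma \ref{lemmaequiv} (Condition \ref{cond1}) then applies to this product state and yields
\begin{eqnarray*}
e^{iHs/\hbar}\bigl(\rho^0(t_0+s)\otimes\sigma(t_0)\bigr)e^{-iHs/\hbar}
= \rho(t_0)\otimes\sigma(t_0-s),
\end{eqnarray*}
because the system factor $U_S^\dagger\rho' U_S = \rho(t_0)$ and the apparatus factor evolves freely to $\sigma(t_0-s)$.

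Finally, the right-hand side is a fidelity between two product states that share the same system factor. Using the multiplicativity $F(\rho\otimes\sigma_1,\rho\otimes\sigma_2)=F(\sigma_1,\sigma_2)$ collapses it to $F(\sigma(t_0),\sigma(t_0-s))$, to which the preceding lemma (\ref{Mandelstam}) directly applies, giving the desired $\cos(\Delta H_A s/\hbar)$ bound in the stated range of $s$. The only nontrivial step is spotting that the apparatus reference $\sigma(t_0)$, together with Condition \ref{cond1}, converts the fidelity between an interacting and a free evolution into a fidelity between two freely evolving apparatus states separated by time $s$; after this, every inequality is a textbook property of $F$ and everything else is bookkeeping.
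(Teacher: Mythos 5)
Your proposal is correct and is essentially the paper's own argument run in reverse: the paper compares the forward evolutions of $\rho(t_0)\otimes\sigma(t_0)$ and $\rho(t_0)\otimes\sigma(t_0-s)$, whereas you evolve the reference state $\rho^0(t)\otimes\sigma(t_0)$ backward to reach exactly the same pair, using the same ingredients (unitary invariance and monotonicity of $F$, Lemma \ref{lemmaequiv}, and the Mandelstam--Tamm bound \eqref{Mandelstam}). No gaps; the choice of $\sigma(t_0)$ as the apparatus reference is precisely the paper's trick in disguise.
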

\begin{proof}
Let us consider two states $\Theta_{t_0}(t_0)$ and $\Gamma_t(t_0)$ $(0\leq t-t_0
\leq \frac{\pi \hbar}{2 \Delta H_c})$ defined by 
\begin{eqnarray*}
\Theta_{t_0}(t_0)&:=&\rho(t_0)\otimes \sigma(t_0)
\\
\Gamma_t(t_0)&:=&\rho(t_0) \otimes \sigma(t_0 -(t-t_0)).
\end{eqnarray*}
These states evolve with the Hamiltonian $H=H_S+H_A+V$. 
We denote the states at time $t$ by $\Theta_{t_0}(t)$ and $\Gamma_t(t)$. 
Although $\Theta_{t_0}(t)$ may have a complicated form, $\Gamma_t(t)$ has 
a simple form as it evolves freely, 
\begin{eqnarray*}
\Gamma_t(t)=\rho^0(t)\otimes \sigma(t_0),
\end{eqnarray*}
where we used 
Lemma \ref{lemma1}.
Because the fidelity between the two states is invariant 
under unitary evolution \cite{Nielsen}, it holds that 
\begin{eqnarray*}
F(\Theta_{t_0}(t_0), \Gamma_t(t_0))
=F(\Theta_{t_0}(t), \Gamma_t(t)). 
\end{eqnarray*}
The left-hand side of the above equation becomes 
\begin{eqnarray*}
F(\Theta_{t_0}(t_0), \Gamma_t(t_0))
=
F(\sigma(t_0), \sigma(t_0 -(t-t_0)), 
\end{eqnarray*}
and the right-hand side is bounded as 
\begin{eqnarray*}
F(\Theta_{t_0}(t), \Gamma_t(t))\leq F(\rho(t), \rho^0(t)), 
\end{eqnarray*}
where we utilized the fact that the fidelity decreases for restricted states \cite{Nielsen}. 
Thus it holds that 
\begin{eqnarray*}
F(\sigma(t_0), \sigma(t_0 -(t-t_0)) \leq F(\rho(t),\rho^0(t)). 
\end{eqnarray*} 
The left-hand side of this inequality represents the speed of 
time evolution of the apparatus and is bounded as (\ref{Mandelstam}).
This concludes the proof. 
\end{proof}
\par
Now, we use an information-disturbance trade-off relation 
to estimate the size of the 
perturbation caused 
by a measurement. 
We 
study a process in which a sharp observable is perfectly measured.  
The following is our main result: 
\begin{theorem}\label{th:mainmain}
Let us consider a measurement process of a sharp observable $\P
=\{\P_n\}$ that satisfies  
Condition \ref{cond1}.
Its measurement time duration $\tau$ and energy fluctuation of 
an apparatus $\Delta H_A$ must satisfy 
\begin{eqnarray*}
\tau \cdot \Delta H_A \geq \frac{\pi \hbar}{4}. 
\end{eqnarray*}
\end{theorem}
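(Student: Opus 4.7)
The plan is to combine the Mandelstam-Tamm-type bound of Theorem \ref{ETtheorem} with an information-disturbance argument, in which perfect measurement of $\P$ forces two conjugate superpositions to share the same reduced state on $\hi$. Pick orthogonal unit vectors $|\phi_0\rangle, |\phi_1\rangle$ lying in the ranges of two non-zero projectors $\P_0$ and $\P_1$ (if no such pair exists, $\P$ is trivial and there is nothing to prove). Purify $\sigma(t_0)$ to a unit vector $|\sigma\rangle$, set $U := e^{-iH\tau/\hbar}$ and $|\Psi_n\rangle := U(|\phi_n\rangle \otimes |\sigma\rangle)$ for $n = 0,1$, and form the conjugate superpositions $|\psi_\pm\rangle := \frac{1}{\sqrt{2}}(|\phi_0\rangle \pm |\phi_1\rangle)$ together with their freely evolved versions $\rho^0_\pm(\tau) := e^{-iH_S\tau/\hbar}|\psi_\pm\rangle\langle\psi_\pm|e^{iH_S\tau/\hbar}$, which are pure and mutually orthogonal.

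The information-disturbance step goes as follows. The measurement condition applied to input $|\phi_n\rangle\langle\phi_n|$ gives $\langle \Psi_n|\id\otimes\E_n|\Psi_n\rangle = 1$; since $0 \leq \E_n \leq \id$ and $\sum_m \E_m = \id$, this forces $(\id\otimes\E_n)|\Psi_n\rangle = |\Psi_n\rangle$ together with $(\id\otimes\E_m)|\Psi_n\rangle = 0$ for every $m\neq n$. Hence $|\Psi_0\rangle$ and $|\Psi_1\rangle$ sit in orthogonal spectral subspaces of $\id\otimes\E_0$ (eigenvalues $1$ and $0$ respectively). Expanding the partial trace in an ONB of $\hik$ aligned with this decomposition shows $\trhik{|\Psi_0\rangle\langle\Psi_1|}=0$, so the evolved reduced states on $\hi$ starting from $|\psi_\pm\rangle$ coincide:
\begin{eqnarray*}
\rho_+(\tau)=\rho_-(\tau)=\frac{1}{2}\bigl(\trhik{|\Psi_0\rangle\langle\Psi_0|}+\trhik{|\Psi_1\rangle\langle\Psi_1|}\bigr)=:\bar\rho(\tau).
\end{eqnarray*}

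Assume $\tau \leq \pi\hbar/(2\Delta H_A)$; otherwise $\tau\cdot\Delta H_A > \pi\hbar/2 > \pi\hbar/4$ already. Theorem \ref{ETtheorem} applied to each of the two initial states $|\psi_\pm\rangle\langle\psi_\pm|$ gives $F(\bar\rho(\tau),\rho^0_\pm(\tau)) \geq \cos(\Delta H_A\tau/\hbar)$. Since $\rho^0_\pm(\tau)$ is pure, $F(\bar\rho(\tau),\rho^0_\pm(\tau))^2 = \langle\psi_\pm(\tau)|\bar\rho(\tau)|\psi_\pm(\tau)\rangle$, where $|\psi_\pm(\tau)\rangle := e^{-iH_S\tau/\hbar}|\psi_\pm\rangle$ is an orthonormal pair, so summing these two expectations is bounded above by $\mbox{tr}[\bar\rho(\tau)]=1$. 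The resulting inequality $2\cos^2(\Delta H_A\tau/\hbar) \leq 1$ rearranges to the claimed $\tau\cdot\Delta H_A \geq \pi\hbar/4$.

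The main obstacle I anticipate is the information-disturbance step for a general POVM meter $\E$: one would like to avoid invoking a Naimark dilation, which could complicate the energy bookkeeping for $\Delta H_A$. The saturating equality $\langle\Psi_n|\id\otimes\E_n|\Psi_n\rangle = 1$ resolves this directly, since it turns $\id\otimes\E_n$ into a genuine orthogonal projection on the subspace where $|\Psi_n\rangle$ lives, and the cross-trace vanishing then goes through without any dilation.
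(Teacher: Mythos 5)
Your proof is correct and follows essentially the same route as the paper: perfect measurement forces the two conjugate superpositions $|\psi_\pm\rangle$ to evolve to the same reduced state on the system, and combining the resulting fidelity bound $2\cos^2(\Delta H_A\tau/\hbar)\leq 1$ with Theorem \ref{ETtheorem} gives the claim. The only (harmless) differences are that you derive the vanishing of $\mathrm{tr}_{\mathcal{K}}[|\Psi_0\rangle\langle\Psi_1|]$ from the saturation $\langle\Psi_n|\id\otimes\E_n|\Psi_n\rangle=1$ rather than via the Janssens--Maassen Cauchy--Schwarz estimate (which the paper uses because it also survives imperfect measurements, cf.\ Theorem \ref{th:error}), and that you sum the two fidelity bounds instead of taking the minimum.
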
 
\begin{proof}
We consider the dynamics  
from time $t=t_0$ to $t=t_0+\tau$. 
Suppose that $0$ and $1$ are possible outcomes. 
We consider two states $|0\rangle $ and $|1\rangle$ 
satisfying $P_0 |0\rangle = |0\rangle$, $P_1|0\rangle  = 0$, 
$P_1|1\rangle = |1\rangle$, and $P_0|1\rangle =0$.
We suppose that for a time duration $\tau$ 
an initial state (at time $t_0$) $|0\rangle \langle 0|\otimes \sigma(t_0)$ 
evolves to $\Theta_0$, 
and an initial state $|1\rangle \langle 1|\otimes\sigma(t_0)$ 
evolves to $\Theta_1$. 
In a perfectly accurate measurement process, 
their restricted states to the apparatus, 
$\mbox{tr}_{\hi}[\Theta_0]$ and $\mbox{tr}_{\hi}[\Theta_1]$,  
are perfectly distinguishable.
We can show that this dynamics spoils completely 
the possibility to measure 
its ``conjugate" observable afterwards. 
That is, if we define $|\pm\rangle = \frac{1}{\sqrt{2}}
(|0\rangle \pm |1\rangle)$, 
initial states $|\pm\rangle \langle \pm | \otimes \sigma(t_0)$ 
evolve to states whose restriction to the 
system coincide. 
We follow the derivation given by Janssens and Maassen \cite{Janssens}. 
The states to be compared are 
$\mbox{tr}_A[e^{-i \frac{H}{\hbar}\tau} 
(|\pm\rangle \langle \pm| \otimes \sigma(t_0))
e^{i \frac{H}{\hbar}\tau}]$. 
For an arbitrary operator $\A$ on the system, 
it holds that 
\begin{eqnarray}
|\mbox{tr}[e^{-i \frac{H}{\hbar}\tau} ((|+\rangle \langle + |
- |-\rangle \langle - |) \otimes \sigma(t_0)) e^{i \frac{H}{\hbar}\tau}
(\A\otimes \id)]
&\leq& 2 | \mbox{tr}[e^{-i \frac{H}{\hbar}\tau}
(|1\rangle \langle 0 | \otimes \sigma(t_0))e^{i \frac{H}{\hbar}\tau} 
(\A \otimes \id)]| 
\nonumber \\
&\leq &
2 \sum_n |\mbox{tr}[e^{-i \frac{H}{\hbar}\tau}
(|1\rangle \langle 0 | \otimes \sigma(t_0))e^{i \frac{H}{\hbar}\tau} 
(\A \otimes \E_n)]| 
\nonumber \\
&=&
2 \sum_n | \langle 0 \otimes \phi(t_0)| e^{i \frac{H}{\hbar}\tau}
(\A\otimes \E_n\otimes \id) e^{-i \frac{H}{\hbar}\tau} |1\otimes \phi(t_0)\rangle |,  
\label{tochuu1}
\end{eqnarray}
where $|\phi(t_0)\rangle $ is a purified vector of 
$\sigma(t_0)$. 
By using the Cauchy-Schwarz inequality, we obtain 
 \begin{eqnarray*}
(\ref{tochuu1})
&\leq& 2 \Vert \A\Vert 
\sum_n \langle 0 \otimes \phi(t_0)| 
e^{i \frac{H}{\hbar}\tau} (\id \otimes \E_n\otimes \id)
e^{-i \frac{H}{\hbar}\tau}|0 \otimes \phi(t_0)\rangle^{1/2}
\langle 1 \otimes \phi(t_0)| 
e^{i \frac{H}{\hbar}\tau} (\id \otimes \E_n\otimes \id)
e^{-i \frac{H}{\hbar}\tau}|1 \otimes \phi(t_0)\rangle^{1/2}
\\
&=&
2 \Vert \A \Vert \sum_n \delta_{n0} \delta_{n1} =0. 
\end{eqnarray*}
Thus the states $|+\rangle \langle +| \otimes \sigma(t_0)$ 
and $|-\rangle \langle -| \otimes \sigma(t_0)$ 
evolve into states whose restriction to the system is 
an identical state $\rho$. 
One can conclude that at least one of the states 
$|\pm\rangle$ is perturbed.  
To estimate the magnitude of this perturbation we consider 
unitary evolution governed by the Hamiltonian $H_S$. 
In time $\tau$, this ``unperturbed" dynamics changes 
$|\pm\rangle $ to a pair of orthogonal 
states $|\pm'\rangle$ of the system. 
We estimate $F(\rho, |+'\rangle \langle +'|)$ 
and $F(\rho, |-'\rangle \langle -'|)$. 
As $|\pm'\rangle$ are orthogonal, we have  
\begin{eqnarray*}
F(\rho, |+'\rangle \langle +'|)^2 + F(\rho, |-'\rangle \langle -'|)^2 
&=&\langle +' |\rho|+'\rangle + \langle -'| \rho |-'\rangle 
\\
&\leq& \mbox{tr}[\rho]=1. 
\end{eqnarray*} 
Thus we can conclude 
\begin{eqnarray*}
\min \{ F(\rho, |+'\rangle \langle +'|), F(\rho, |-'\rangle \langle -'|)\} 
\leq \frac{1}{\sqrt{2}}. 
\end{eqnarray*}
We assume $F(\rho, |+'\rangle \langle +'|)\leq \frac{1}{\sqrt{2}}$. 
Combining it with Theorem \ref{ETtheorem} 
by putting $\rho(t_0)=|+\rangle \langle +|$, 
we obtain 
\begin{eqnarray*}
\cos\left(
\frac{\tau \Delta H_A}{\hbar}  
\right)\leq \frac{1}{\sqrt{2}}.
\end{eqnarray*}
This ends the proof. 
\end{proof}
The above theorem can be easily extended to obtain 
the following:
\begin{proposition}
Let us consider a measurement process satisfying Condition \ref{cond1}
of 
a sharp observable $\P$ that has at least $N$ outcomes. 
The measurement time duration $\tau$ and the fluctuation of the apparatus 
Hamiltonian $\Delta H_A$ satisfy 
\begin{eqnarray*}
\cos\left(\frac{\tau \Delta H_A}{\hbar}\right)
\leq \frac{1}{\sqrt{N}}.
\end{eqnarray*}
\end{proposition}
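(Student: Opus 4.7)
The plan is to imitate the proof of Theorem \ref{th:mainmain} almost verbatim, but with an $N$-dimensional Fourier-type conjugate basis replacing the two states $|\pm\rangle$. Pick any orthonormal vectors $|k\rangle \in \textrm{ran}\,\P_k$ for $k=0,1,\dots,N-1$, set $\omega:=e^{2\pi i/N}$, and define the $N$ mutually orthonormal ``conjugate'' vectors
\begin{equation*}
|\psi_j\rangle := \frac{1}{\sqrt{N}}\sum_{k=0}^{N-1}\omega^{jk}|k\rangle,\qquad j=0,\dots,N-1.
\end{equation*}
Let $\rho^{(j)} := \textrm{tr}_\hik[e^{-i\frac{H}{\hbar}\tau}(|\psi_j\rangle\langle\psi_j|\otimes\sigma(t_0))e^{i\frac{H}{\hbar}\tau}]$.

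The first step is to show that $\rho^{(j)}$ is in fact independent of $j$. Expanding $|\psi_j\rangle\langle\psi_j|=\frac{1}{N}\sum_{k,l}\omega^{j(k-l)}|k\rangle\langle l|$, it suffices to prove that the ``off-diagonal'' reduced operators $\rho^{(kl)}:=\textrm{tr}_\hik[e^{-i\frac{H}{\hbar}\tau}(|k\rangle\langle l|\otimes\sigma(t_0))e^{i\frac{H}{\hbar}\tau}]$ vanish for $k\neq l$. I would establish this by repeating, for each such pair $(k,l)$, the Cauchy--Schwarz chain of inequalities (\ref{tochuu1}) already used in the proof of Theorem \ref{th:mainmain}: for any operator $\A$ on the system,
\begin{equation*}
|\textrm{tr}[\rho^{(kl)}\A]| \leq \sum_n |\langle k\otimes\phi(t_0)|e^{i\frac{H}{\hbar}\tau}(\A\otimes\E_n\otimes\id)e^{-i\frac{H}{\hbar}\tau}|l\otimes\phi(t_0)\rangle|,
\end{equation*}
and then Cauchy--Schwarz plus the perfect-measurement identity $\langle m\otimes\phi(t_0)|e^{i\frac{H}{\hbar}\tau}(\id\otimes\E_n\otimes\id)e^{-i\frac{H}{\hbar}\tau}|m\otimes\phi(t_0)\rangle = \delta_{mn}$ forces the bound to $0$. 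Consequently $\rho^{(j)}=\frac{1}{N}\sum_k\rho^{(kk)}=:\rho$ for every $j$.

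For the second step, the freely evolved vectors $|\psi_j'\rangle := e^{-i\frac{H_S}{\hbar}\tau}|\psi_j\rangle$ remain mutually orthonormal, so $\sum_j|\psi_j'\rangle\langle\psi_j'|\leq\id$ and therefore
\begin{equation*}
\sum_{j=0}^{N-1}F\!\left(\rho,|\psi_j'\rangle\langle\psi_j'|\right)^2 = \sum_j \langle\psi_j'|\rho|\psi_j'\rangle \leq \textrm{tr}[\rho]=1,
\end{equation*}
which gives $\min_j F(\rho,|\psi_j'\rangle\langle\psi_j'|)\leq 1/\sqrt{N}$. Picking a $j^\star$ attaining this minimum and applying Theorem \ref{ETtheorem} with initial system state $\rho(t_0)=|\psi_{j^\star}\rangle\langle\psi_{j^\star}|$ (for which $\rho^0(\tau)=|\psi_{j^\star}'\rangle\langle\psi_{j^\star}'|$ and $\rho(\tau)=\rho$) yields $\cos(\tau\Delta H_A/\hbar)\leq 1/\sqrt{N}$ in the regime $\tau\Delta H_A/\hbar\leq\pi/2$; outside this regime the cosine is nonpositive and the inequality is trivial. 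The main conceptual step is the first one, the vanishing of the cross-terms $\rho^{(kl)}$, but this is a direct repetition of the information-disturbance argument already used in Theorem \ref{th:mainmain}, so no genuinely new obstacle arises.
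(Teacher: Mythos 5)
Your proposal is correct and follows essentially the same route as the paper: the Fourier-conjugate basis $|\psi_j\rangle$ is exactly the paper's $|\tilde{k}\rangle$, the vanishing of the cross-terms is established by the same Janssens--Maassen Cauchy--Schwarz argument recycled from Theorem \ref{th:mainmain}, and the bound $\min_j F \leq 1/\sqrt{N}$ followed by Theorem \ref{ETtheorem} is identical. Your phrasing via the individual off-diagonal operators $\rho^{(kl)}=0$ (rather than the paper's pairwise differences $|\tilde{m}\rangle\langle\tilde{m}|-|\tilde{l}\rangle\langle\tilde{l}|$) is a slightly cleaner bookkeeping of the same computation, and your remark about the trivial regime $\tau\Delta H_A/\hbar>\pi/2$ is a small point the paper leaves implicit.
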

\begin{proof}
We can assume that PVM $\P$ 
has the elements $\{\P_0, \P_1, \ldots, \P_{N-1}\}$
that have eigenvectors $\{|n\rangle\}_{n=0,1,\ldots, N-1}$ satisfying 
$\P_n |n\rangle =|n\rangle$. 
Then one can show that states $|\tilde{k}\rangle 
:= \frac{1}{\sqrt{N}}\sum_n e^{i \frac{ 2\pi k n}{N}}|n\rangle$ 
evolve into states whose restrictions to the system coincide.
In fact, for an arbitrary operator $\A$ on the system, it holds that
for $m \neq l$,  
\begin{eqnarray*}
| \mbox{tr}[e^{-i \frac{H}{\hbar}\tau}(|\tilde{m}\rangle 
\langle \tilde{m} | - |\tilde{l} \rangle 
\langle \tilde{l}| \otimes \sigma(t_0) )e^{i 
\frac{H}{\hbar} \tau}(\A\otimes \id)] 
\leq \frac{1}{N} 
\sum_{n \neq n'} | \langle n \otimes \phi(t_0) | 
e^{i \frac{H}{\hbar}\tau}(\A \otimes \id ) 
e^{-i \frac{H}{\hbar}\tau}|n' \otimes \phi(t_0)\rangle |, 
\end{eqnarray*}
where $|\phi(t_0)\rangle$ is a purified state of $\sigma(t_0)$. 
The argument employed above bounds the right hand side of this 
inequality by zero. 
\par
Let us denote this identical state at time $t=t_0+ \tau$ on the system by $\rho$. 
We denote the states obtained by unperturbed (free) 
time evolution by $\{|\tilde{k}'\rangle\}$, which are orthonormal.  
Now we have
\begin{eqnarray*}
\sum_{k=0}^{N-1} F(\rho, |\tilde{k}'\rangle \langle \tilde{k}'|)^2
\leq \mbox{tr}[\rho]=1. 
\end{eqnarray*} 
Thus $min_k F(\rho, |\tilde{k}'\rangle \langle \tilde{k}'|) \leq \frac{1}{\sqrt{N}}$ 
is obtained. This ends the proof. 
\end{proof}
Letting $N$ go to $\infty$, we obtain the following: 
\begin{corollary}
Let us consider a measurement process satisfying Condition \ref{cond1} of 
a sharp observable that has infinitely many outcomes.
The measurement time duration $\tau$ and the fluctuation of the apparatus 
Hamiltonian $\Delta H_A$ must satisfy 
\begin{eqnarray*}
\tau \cdot \Delta H_A \geq \frac{\pi \hbar}{2}. 
\end{eqnarray*}
\end{corollary}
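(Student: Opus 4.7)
The plan is to derive the corollary directly from the preceding Proposition by a limiting argument, handling one boundary subtlety about the range of validity of the Mandelstam-Tamm estimate.

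First I would split into two cases according to whether $\tau \Delta H_A / \hbar$ already exceeds $\pi/2$. If $\tau \Delta H_A \geq \pi \hbar/2$ holds outright, there is nothing to prove. So assume $\tau \Delta H_A < \pi \hbar / 2$. In this regime $\tau$ lies in the interval $[0, \pi \hbar /(2\Delta H_A)]$ where the Mandelstam-Tamm bound used in Theorem \ref{ETtheorem} (and hence in the Proposition) is applicable, so the hypothesis of the Proposition is genuinely available.

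Next I would invoke the Proposition for each finite $N \in \mathbb{N}$. Because the sharp observable $\P$ under consideration has infinitely many outcomes, the index set $\Omega_{\P}$ contains at least $N$ elements for every $N$, so we may restrict attention to any $N$ of them and construct the $N$ mutually unbiased superpositions $|\tilde{k}\rangle$ used in the proof of the Proposition. This yields
\begin{eqnarray*}
\cos\!\left( \frac{\tau \Delta H_A}{\hbar}\right) \leq \frac{1}{\sqrt{N}}
\end{eqnarray*}
for every positive integer $N$.

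Finally, passing to the limit $N \to \infty$ gives $\cos(\tau \Delta H_A/\hbar) \leq 0$. Since $\tau \Delta H_A/\hbar$ is in $[0, \pi/2]$ by assumption, this forces $\tau \Delta H_A/\hbar = \pi/2$, i.e.\ $\tau \cdot \Delta H_A \geq \pi\hbar/2$, as claimed. I do not foresee a substantial obstacle here: the Proposition does the heavy lifting, and the only care needed is the splitting into the two cases so that one does not try to apply the Mandelstam-Tamm inequality outside its range of validity. A minor technical point to confirm is that the construction of the $|\tilde{k}\rangle$ and the complete-distortion argument in the Proposition genuinely uses only $N$ outcomes (not all of them), which is immediate from inspecting that proof.
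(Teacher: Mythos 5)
Your proposal is correct and follows essentially the same route as the paper, which simply invokes the Proposition and lets $N\to\infty$; you merely make explicit the case split ensuring the Mandelstam--Tamm bound is applied within its range of validity (and note that the limit argument in your second case actually yields a contradiction with the assumed strict inequality, which is fine). No substantive difference from the paper's argument.
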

As $\Delta H_A$ can be infinite for some states, it is instructive to 
derive a similar trade-off relation for other quantities characterizing 
energy indefiniteness. 
For $0 < \alpha \leq 1$, the overall width $\Delta_{\alpha}f$ of a distribution function $f$ on $\mathbb{R}$ is 
defined as the width of the smallest interval $I$ such that 
\begin{eqnarray*}
\int_I  dx  f(x) \geq \alpha. 
\end{eqnarray*}
This quantity has its own corresponding energy-time 
uncertainty relation 
\cite{Uffink}. 
A closed system with a Hamiltonian $H$ is considered. 
Let $\tau_\beta$ be the minimal time it takes for a
state $|\psi(0)\rangle$ to evolve to a state 
$|\psi(t)\rangle$ such that 
\begin{eqnarray*}
|\langle \psi(0)| \psi(t)\rangle | \leq \beta. 
\end{eqnarray*} 
Then one can show, for $\beta\leq 2\alpha-1$,
\begin{eqnarray*}
\tau_{\beta} \cdot \Delta_{\alpha}(H) \geq 
2\hbar \arccos \frac{\beta + 1-\alpha}{\alpha}, 
\end{eqnarray*}
where $\Delta_{\alpha}(H)$ is the overall width 
of a distribution function $f(E)$ defined by 
$\langle \psi(0)| E(dE) |\psi(0)\rangle 
= f(E) dE$, where $E(dE)$ is a spectral measure of $H$. 
\par 
This inequality is employed to derive another version of 
the above result. The proof is given in the Appendix 
\ref{AppendixA}. 
\begin{theorem}\label{theoremappendix}
Let $\tau$ be a measurement duration. For 
any $\alpha$ with $\alpha\geq \frac{1}{2}\left(1 +\frac{1}{\sqrt{2}}\right)$, 
it holds that 
\begin{eqnarray*}
\tau \cdot \Delta_{\alpha}(H) 
\geq 2 \hbar \arccos \frac{\frac{1}{\sqrt{2}}+ 1 -\alpha}{\alpha}.
\end{eqnarray*}
\end{theorem}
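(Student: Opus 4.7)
The strategy is to recycle the architecture of the proof of Theorem \ref{th:mainmain} almost verbatim and replace only the Mandelstam--Tamm input by Uffink's overall-width speed limit. The information-disturbance half of the previous proof was state-space geometric and used nothing about $H_A$; it is the energy-side input that needs to change, and Uffink's inequality is the natural substitute because its hypothesis $\beta \le 2\alpha-1$ with $\beta=1/\sqrt 2$ is precisely the stated assumption $\alpha\ge \tfrac{1}{2}(1+1/\sqrt 2)$.

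First I would copy the information-disturbance step from Theorem \ref{th:mainmain} unchanged. Take two eigenstates $|0\rangle, |1\rangle$ of $\P$ and their conjugate superpositions $|\pm\rangle=\tfrac{1}{\sqrt 2}(|0\rangle\pm|1\rangle)$. The Cauchy--Schwarz argument in (\ref{tochuu1}) shows that under a perfect measurement the restricted states at time $t_0+\tau$ produced from $|\pm\rangle\langle\pm|\otimes\sigma(t_0)$ coincide in a common density operator $\rho$. Since the freely evolved images $|\pm'\rangle$ remain orthogonal, $F(\rho,|+'\rangle\langle +'|)^2+F(\rho,|-'\rangle\langle -'|)^2\le 1$, so at least one of these fidelities is $\le 1/\sqrt 2$. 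Without loss of generality take it to be $F(\rho,|+'\rangle\langle+'|)\le 1/\sqrt 2$.

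Next I would run the fidelity-backpropagation step of Theorem \ref{ETtheorem}. Its proof relied only on Condition \ref{cond1} (through Lemma \ref{lemma1}) and on unitary invariance and the data-processing inequality for fidelity; it gave $F(\sigma(t_0),\sigma(t_0-\tau))\le F(\rho(\tau),\rho^0(\tau))$. Combined with the previous paragraph this yields $F(\sigma(t_0),\sigma(t_0-\tau))\le 1/\sqrt 2$. By Uhlmann's theorem we can choose a purification $|\phi(t_0)\rangle$ of $\sigma(t_0)$ on an enlarged apparatus space with the dilated Hamiltonian $H_A\otimes\id$ so that $|\langle\phi(t_0)|e^{iH_A\tau/\hbar}|\phi(t_0)\rangle|\le 1/\sqrt 2$, while the energy distribution $\mathrm{tr}[\sigma(t_0)E_{H_A}(\mathrm{d}E)]$ is left invariant by this purification, so the overall width $\Delta_\alpha(H_A)$ is unchanged.

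Finally I would invoke Uffink's inequality with $\beta=1/\sqrt 2$. Since $\tau$ is at least the minimal time $\tau_\beta$ needed to reach overlap $\beta$, the inequality $\tau_\beta\cdot\Delta_\alpha(H)\ge 2\hbar\arccos\frac{\beta+1-\alpha}{\alpha}$ gives the claim, its applicability condition $\beta\le 2\alpha-1$ being exactly $\alpha\ge \tfrac{1}{2}(1+1/\sqrt 2)$. There is no genuine obstacle: the only subtlety is the purification step, which must preserve the energy distribution so that the same $\Delta_\alpha$ appearing in the theorem is the one to which Uffink's inequality applies; this is immediate from $\mathrm{tr}[\sigma(t_0)f(H_A)]=\langle\phi(t_0)|f(H_A)\otimes\id|\phi(t_0)\rangle$ for all bounded Borel $f$. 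Thus the entire content of the theorem is a one-line swap of speed limits inside the proof of Theorem \ref{th:mainmain}.
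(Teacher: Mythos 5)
Your proposal is correct and follows essentially the same route as the paper's Appendix A proof: the information--disturbance step forces some conjugate state to have fidelity at most $1/\sqrt{2}$ with its free evolution, the fidelity back-propagation of Theorem \ref{ETtheorem} transfers this to the overlap $|\langle\phi(t_0)|\phi(t_0-\tau)\rangle|$, and Uffink's overall-width bound with $\beta=1/\sqrt{2}$ (whose hypothesis $\beta\leq 2\alpha-1$ is exactly the stated condition on $\alpha$) finishes the argument. The only cosmetic difference is that the paper assumes $\sigma(t_0)$ pure ``for simplicity,'' whereas you handle the mixed case explicitly by a purification that preserves the energy distribution of $H_A$ — a harmless and in fact slightly more careful variant.
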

It is interesting to extend the results to measurements with errors.
Let us consider again a sharp observable $\P=\{\P_n\}$. 
For an initial state 
$\rho_n$ with $\P_n \rho_n \P_n =\P_n$,
an imperfect measurement allows errors. 
That is, there may exist $\rho_n$ and $m$ such that   
$Prob(m|\rho_n):=
\mbox{tr}[e^{-i \frac{H \tau}{\hbar}} (\rho_n \otimes \sigma(t_0))
e^{i \frac{H \tau}{\hbar}} (\id \otimes \E_n)]
\neq \delta_{nm}$ holds.  
We introduce 
$P_{error}$ by 
\begin{eqnarray*}
P_{error}:= \sup_n \sup_{\rho_n: \P_n \rho_n \P_n = 
\rho_n} (1- P(n|\rho_n)) 
\end{eqnarray*}
which represents 
the worst case error probability.
\begin{theorem}\label{th:error}
Let us consider a measurement process satisfying 
Condition \ref{cond1} of a sharp observable $\P
=\{\P_n\}$ with the worst case error probability $P_{error}$.
Its measurement time duration $\tau$ and energy fluctuation of 
an apparatus $\Delta H_A$ satisfies 
\begin{eqnarray*}
\cos\left(
\frac{\tau \Delta H_A}{\hbar}
\right)
\leq 
\sqrt{
\frac{1+ 6 \sqrt{P_{error}}}{2}
}.
\end{eqnarray*}
\end{theorem}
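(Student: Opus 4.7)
The strategy is to mimic the proof of Theorem \ref{th:mainmain}, this time keeping track of the error that was nullified there by perfect orthogonality of the meter outcomes. Pick orthonormal vectors $|0\rangle$ and $|1\rangle$ in $\hi$ with $\P_0|0\rangle=|0\rangle$ and $\P_1|1\rangle=|1\rangle$, set $|\pm\rangle = \tfrac{1}{\sqrt{2}}(|0\rangle\pm |1\rangle)$, and write
\begin{eqnarray*}
\rho^\pm_\tau := \mbox{tr}_{\hik}\bigl[e^{-i H\tau/\hbar}(|\pm\rangle\langle\pm|\otimes\sigma(t_0))e^{i H\tau/\hbar}\bigr]
\end{eqnarray*}
for the reduced evolved states of the conjugate preparations. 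In the perfect case these coincide; here they will only be close, with the gap quantified by $P_{error}$.

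The key technical step is an error-tolerant Cauchy--Schwarz bound. Letting $q_k(n) := \mbox{tr}[e^{-i H\tau/\hbar}(|k\rangle\langle k|\otimes\sigma(t_0))e^{iH\tau/\hbar}(\id\otimes \E_n)]$, the hypothesis $\P_k|k\rangle = |k\rangle$ together with the definition of $P_{error}$ forces $q_k(k)\geq 1-P_{error}$ and $\sum_{n\neq k} q_k(n)\leq P_{error}$ for $k=0,1$. Splitting the sum at $n=0$ and $n=1$ and applying Cauchy--Schwarz to the remaining tail yields
\begin{eqnarray*}
\sum_n \sqrt{q_0(n)q_1(n)} \leq \sqrt{q_0(0)q_1(0)} + \sqrt{q_0(1)q_1(1)} + \sqrt{\sum_{n\neq 0,1} q_0(n)}\sqrt{\sum_{n\neq 0,1} q_1(n)} \leq 2\sqrt{P_{error}} + P_{error}.
\end{eqnarray*}
Feeding this into the Janssens--Maassen computation of Theorem \ref{th:mainmain} (it now replaces the final ``$=0$'' line) and taking the supremum over system operators $\A$ with $\Vert\A\Vert\leq 1$ gives
\begin{eqnarray*}
\Vert \rho^+_\tau - \rho^-_\tau\Vert_1 \leq 4\sqrt{P_{error}} + 2P_{error}.
\end{eqnarray*}

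To close the argument, let $|\pm'\rangle$ be the orthonormal states obtained from $|\pm\rangle$ by free $H_S$-evolution over time $\tau$. Using $F(\rho, |\psi\rangle\langle\psi|)^2 = \langle\psi|\rho|\psi\rangle$ and the fact that $|+'\rangle\langle+'|+|-'\rangle\langle-'|$ is a projection,
\begin{eqnarray*}
F(\rho^+_\tau, |+'\rangle\langle+'|)^2 + F(\rho^-_\tau, |-'\rangle\langle-'|)^2 &\leq& \langle+'|\rho^+_\tau|+'\rangle + \langle-'|\rho^+_\tau|-'\rangle + \Vert \rho^+_\tau - \rho^-_\tau\Vert_1 \\
&\leq& 1 + 4\sqrt{P_{error}} + 2P_{error} \;\leq\; 1 + 6\sqrt{P_{error}},
\end{eqnarray*}
since $P_{error}\leq\sqrt{P_{error}}$. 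Consequently the smaller of the two fidelities is bounded by $\sqrt{(1+6\sqrt{P_{error}})/2}$, and applying Theorem \ref{ETtheorem} to whichever of $|+\rangle$, $|-\rangle$ realizes the minimum delivers the claimed inequality.

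The main obstacle is the Cauchy--Schwarz accounting: to obtain a bound that scales as $\sqrt{P_{error}}$ (rather than the trivial $O(1)$) one must isolate the two diagonal terms $n=0$ and $n=1$, where only one of $q_0(n)$, $q_1(n)$ is small, from the off-diagonal tail where both are small. All remaining steps -- trace-norm inequality, projection bound, and the minimum-of-two-fidelities argument -- are direct adaptations of the perfect case.
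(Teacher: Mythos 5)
Your proposal is correct and follows essentially the same route as the paper's own proof in Appendix \ref{sec:error}: the identical splitting of the Cauchy--Schwarz sum at the outcomes $n=0,1$ to extract the $2\sqrt{P_{error}}+P_{error}\leq 3\sqrt{P_{error}}$ bound, the same resulting trace-distance estimate $6\sqrt{P_{error}}$, and the same projection/fidelity argument feeding into Theorem \ref{ETtheorem}. No gaps.
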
 
The proof is given in Appendix \ref{sec:error}. 

As mentioned, the uncertainty relation for joint measurement, or 
the information-disturbance relation, plays a crucial role in our proof. 
While the quantitative relation given by Janssens and Maassen was employed 
here, other quantitative relations 
\cite{MiHeisenberg, MiJMP, HeMi} 
ought to be applicable. It would be interesting 
to compare the corresponding energy-time uncertainty relations. 
\begin{example} 
Let us consider an apparatus consisting of 
a particle in a two-dimensional $xz-$plane. Its Hilbert space 
is written as $\hik = L^2(\mathbb{R}^2)$. 
We denote by $(q_x, p_x, q_z, p_z)$ 
position and momentum operators 
of their corresponding subscripts. 
We set an apparatus Hamiltonian $H_A$ as 
\begin{eqnarray*}
H_A = p_x. 
\end{eqnarray*}
We consider a qubit (spin-$1/2$ degree of freedom) as a system
whose self Hamiltonian $H_S$ is vanishing. 
Thus a total Hilbert space is 
written as $\hi_T = \hi \otimes \hik
= \mathbb{C}^2 \otimes L^2(\mathbb{R}^2)$. 
We study a time evolution determined by 
the total Hamiltonian,
\begin{eqnarray*}
H= V + H_A =  \sigma_z g(q_x) p_z + p_x, 
\end{eqnarray*}
where $g$ is a nonnegative smooth function whose support is 
in $(0,\delta)$ with $\delta >0$. 
We set an initial state (at time $t=0$) of the apparatus in the position representation 
as
\begin{eqnarray*}
\phi(t=0,x,z)=\langle x, z| \phi(0)\rangle 
= \xi(x) \eta(z), 
\end{eqnarray*}
where $\xi$ is a smooth real function satisfying $\mbox{supp} \xi \subset (-\Delta, 0)$ 
for some $\Delta >0$ and 
$\eta$ is a smooth function 
with $\eta(-z)=\eta(z)$ and $\mbox{supp} \eta \subset (-\epsilon, \epsilon)$ for some 
$\epsilon >0$. 
We set $\tau := \delta + \Delta$ and 
assume that $\epsilon >0$ is small enough to satisfy 
\begin{eqnarray*}
\epsilon < \int^{\tau}_0 dt g(t). 
\end{eqnarray*} 
It is easy to see that this model satisfies Condition \ref{cond1}.  
Its energy fluctuation up to time $t=0$ is 
\begin{eqnarray*}
\Delta H_A = - \hbar^2 \int dx \overline{\xi (x)} \frac{d^2}{dy^2} \xi (x) 
= \hbar^2 \int dx \left(
\frac{d \xi(x)}{dx} 
\right)^2,  
\end{eqnarray*}
which becomes finite for sufficiently smooth $\xi$.
After a lengthy calculation one can see that 
if an initial state of the system is $|1\rangle$ (respectively $|-1\rangle$), 
the probability for $q_z >0$ (resp. $q_z <0$) at time $t=\tau$ is $1$. 
Thus this model works as a measurement model of $\sigma_z$. 
If we rescale $\xi$ as, for $C>0$,  
\begin{eqnarray*}
\xi_C (x):= \sqrt{C} \xi(Cx), 
\end{eqnarray*}
$\Delta$ is scaled as $\Delta /C$ and thus 
$\tau$ scales as $\tau /C$.  
On the other hand, the energy fluctuation $\Delta H_A$ is  
scaled as $C \Delta H_A$. Thus this model illustrates 
the expected energy-time trade-off.  
\end{example}
We close this section with a remark on the definition of apparatus. 
The above example can be regarded as a toy model of the Stern-Gerlach experiment 
if the qubit is interpreted as the spin degree of freedom of a moving particle. 
Thus, the apparatus in this model is the position degree of freedom of a particle. 
\section{Energy-Time Uncertainty Relation II: Interaction Strength}\label{section5}
In the previous section, we derived a trade-off inequality 
between the fluctuation of $H_A$ and the measurement time duration $\tau$. 
In this section, we examine the magnitude of interaction $V$ 
that is required for measurements.  
In this section Condition 
\ref{cond1} 
is not assumed. 
The measurement process is described by a total Hamiltonian 
$H= H_S + H_A +V$ and 
the dynamics up to time $t=t_0$ is irrelevant.  
A state of the apparatus $\sigma(t_0)$ is prepared at time 
$t=t_0$ and 
 a meter observable $\E$ is measured 
at $t=t_0+\tau$.  
The reason why 
we do not need conditions for switching-on device to 
obtain the bound on the interaction is that 
the interaction Hamiltonian is mainly relevant to 
the border between the system and the apparatus while the 
switching-on condition is relevant to the border between the apparatus 
and the classical system.  
\begin{theorem}
For a model to describe a measurement process of 
a PVM, 
the interaction $V$ must satisfy  
\begin{eqnarray*}
\Vert V \Vert \cdot \tau 
\geq \frac{\pi}{4}\hbar. 
\end{eqnarray*}
\end{theorem}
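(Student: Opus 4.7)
The plan is to follow the two-part strategy of Theorem~\ref{th:mainmain}, replacing the Mandelstam--Tamm estimate (which used $\Delta H_A$) by an interaction-picture Mandelstam--Tamm estimate (which uses $\Vert V\Vert$ and is the natural place where Robertson's inequality enters, as the paper forewarns). Condition~\ref{cond1} plays no role; only the perfect PVM measurement property is used. Pick eigenvectors $|0\rangle,|1\rangle$ of two distinct effects $\P_0,\P_1$ and form $|\pm\rangle=(|0\rangle\pm|1\rangle)/\sqrt{2}$. The Cauchy--Schwarz/purification argument around (\ref{tochuu1}) in the proof of Theorem~\ref{th:mainmain} transplants verbatim and shows that the reduced system states
\begin{equation*}
\rho_\pm:=\mbox{tr}_{\hik}\bigl[e^{-iH\tau/\hbar}(|\pm\rangle\langle\pm|\otimes\sigma(t_0))e^{iH\tau/\hbar}\bigr]
\end{equation*}
coincide in a single state $\rho:=\rho_+=\rho_-$ on $\hi$.

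Next, I compare the genuine evolution to the free evolution under $H_0:=H_S+H_A$. Set $U(t):=e^{-iHt/\hbar}$, $U_0(t):=e^{-iH_0t/\hbar}$, and purify $\sigma(t_0)=|\phi(t_0)\rangle\langle\phi(t_0)|$ so that $|\Phi_\pm\rangle:=|\pm\rangle\otimes|\phi(t_0)\rangle$. In the interaction picture the vector $|\psi_I(t)\rangle:=U_0(-t)U(t)|\Phi_+\rangle$ obeys $i\hbar\dot\psi_I=V_I(t)\psi_I$ with $V_I(t):=U_0(-t)VU_0(t)$ and $\Vert V_I(t)\Vert=\Vert V\Vert$. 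Apply Robertson's inequality to the commutator of $V_I(t)$ with the rank-one projection $P:=|\psi_I(0)\rangle\langle\psi_I(0)|$: letting $p(t):=|\langle\psi_I(0)|\psi_I(t)\rangle|^2$, one has $\Delta P=\sqrt{p(1-p)}$ and $\Delta V_I(t)\le\Vert V\Vert$, so that $|\dot p|\le (2\Vert V\Vert/\hbar)\sqrt{p(1-p)}$. The substitution $p=\cos^2\theta$ turns this into $|\dot\theta|\le\Vert V\Vert/\hbar$; integrating from $0$ to $\tau$ yields
\begin{equation*}
|\langle U_0(\tau)\Phi_+\,|\,U(\tau)\Phi_+\rangle|\ \ge\ \cos\!\bigl(\Vert V\Vert\tau/\hbar\bigr),
\end{equation*}
valid whenever $\Vert V\Vert\tau/\hbar\le\pi/2$ (otherwise the statement $\Vert V\Vert\cdot\tau\ge\pi\hbar/4$ is trivial).

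To finish, observe that $U_0(\tau)|\Phi_+\rangle$ factorizes as $|+'\rangle\otimes e^{-iH_A\tau/\hbar}|\phi(t_0)\rangle$ with $|+'\rangle:=e^{-iH_S\tau/\hbar}|+\rangle$, whereas the reduced state of $U(\tau)|\Phi_+\rangle$ on $\hi$ equals $\rho$ by the first paragraph. Monotonicity of fidelity under the partial trace, combined with the displayed inequality, gives $F(\rho,|+'\rangle\langle+'|)\ge\cos(\Vert V\Vert\tau/\hbar)$; running the identical argument starting from $|\Phi_-\rangle$ gives $F(\rho,|-'\rangle\langle-'|)\ge\cos(\Vert V\Vert\tau/\hbar)$ with $|-'\rangle:=e^{-iH_S\tau/\hbar}|-\rangle$. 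Since $\langle+'|-'\rangle=0$, $F(\rho,|+'\rangle\langle+'|)^2+F(\rho,|-'\rangle\langle-'|)^2=\langle+'|\rho|+'\rangle+\langle-'|\rho|-'\rangle\le\mbox{tr}[\rho]=1$, so at least one fidelity is $\le 1/\sqrt{2}$. This forces $\cos(\Vert V\Vert\tau/\hbar)\le 1/\sqrt{2}$, i.e.\ $\Vert V\Vert\cdot\tau\ge\pi\hbar/4$.

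The main obstacle I anticipate is the interaction-picture Robertson step: one must verify that the differential inequality survives the time-dependence of the generator $V_I(t)$, and that bounding the variance $\Delta V_I(t)$ by the operator norm $\Vert V\Vert$ is precisely the loss that yields the constant $\pi/4$ (a sharper bound would require uniform control of the variance of $V$ in the actual evolved state, which is not available). Everything else is a faithful transposition of the machinery of Theorem~\ref{th:mainmain}, with $H_A$ replaced by $H_0$ and Mandelstam--Tamm replaced by its interaction-picture sibling.
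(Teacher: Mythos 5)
Your proposal is correct and follows essentially the same route as the paper's own proof: the same indistinguishability-of-$|\pm\rangle$ argument, the same fidelity-versus-orthogonality bound giving $1/\sqrt 2$, and the same Robertson-based differential inequality for $p(t)=|\langle U_0(t)\Phi_\pm|U(t)\Phi_\pm\rangle|^2$ yielding $p(t)\ge\cos^2(\Vert V\Vert t/\hbar)$. Your interaction-picture phrasing is only a cosmetic repackaging of the paper's two-vector Schr\"odinger-picture computation (the quantity $p(t)$ and the variance bound $\Delta V\le\Vert V\Vert$ are identical), so the concern you flag about the time-dependent generator is harmless.
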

\begin{proof}
Again we consider a process exactly measuring   
a PVM $\P$. 
Two states $|0\rangle $ and $|1\rangle$ satisfying 
$P_0 |0\rangle = |0\rangle$, $P_1|1\rangle = |1\rangle$ 
and $P_1|0\rangle =0$, $P_0 |1\rangle =0$  
exist.  
For a process to distinguish these states 
exactly, its conjugate states are completely 
destroyed. 
That is, if we define $|\pm\rangle = \frac{1}{\sqrt{2}}(|0\rangle + |1\rangle )$, 
initial states $|\pm\rangle \langle \pm | \otimes \sigma(t_0)$ 
are mapped to states that are an identical on the system side. 
We denote by $\rho$ such a final state on the system. 
We also denote by $|\pm'\rangle $ the states obtained by 
free evolution without interaction in time $\tau$.  
We have already shown that
\begin{eqnarray*}
\min\{F(\rho, |+'\rangle \langle +'|), F(\rho, |-' \rangle \langle -'|)\} 
\leq \frac{1}{\sqrt{2}}.
\end{eqnarray*}
We assume $F(\rho, |+ ' \rangle \langle +'|)
\leq \frac{1}{\sqrt{2}}$. 
The state $\rho$ is written as 
\begin{eqnarray*}
\rho = \mbox{tr}_{A} 
[
e^{-i \frac{H \tau}{\hbar}}
(|+\rangle \langle +|\otimes |\phi(t_0)\rangle \langle \phi(t_0)|)
e^{i \frac{H \tau}{\hbar}}
], 
\end{eqnarray*}
where $|\phi(t_0)\rangle$ is a purified state of $\sigma(t_0)$ 
and $\mbox{tr}_A$ denotes a partial trace to obtain 
a restricted state on the system. 
We introduce a time-evolving vector 
$|\Phi (t)\rangle:= e^{-i\frac{H (t-t_0)}{\hbar}} 
|+\rangle \otimes |\phi(t_0)\rangle$. 
On the other hand, 
$|+'\rangle$ satisfies 
\begin{eqnarray*}
|+'\rangle \langle +' |\otimes |\phi(t_0+\tau)\rangle 
\langle \phi(t_0+\tau)| 
= e^{- i \frac{H_0 \tau}{\hbar}}|+\rangle \langle +| \otimes 
|\phi(t_0)\rangle \langle \phi(t_0) | e^{i \frac{H_0 \tau}{\hbar}}, 
\end{eqnarray*}
where $H_0 = H_S+ H_A$. 
We introduce $|\Phi_0(t)\rangle := e^{-i \frac{H_0 (t-t_0)}{\hbar}}
|+\rangle \otimes |\phi(t_0)\rangle$. 
Since the fidelity does not decrease by 
the partial trace, 
the vectors $|\Phi(t)\rangle$ and $|\Phi_0(t)\rangle$ satisfy
\begin{eqnarray}
|\langle \Phi_0(t_0+\tau) |\Phi(t_0 +\tau)\rangle | \leq \frac{1}{\sqrt{2}}. 
\label{eqn12}
\end{eqnarray}
Defining $p(t):= \langle \Phi_0(t) | \Phi(t) \rangle \langle \Phi(t) |\Phi_0(t)\rangle$, 
we obtain 
\begin{eqnarray*}
i \hbar \frac{d}{dt} p(t) 
= \langle \Phi_0(t) | [V, |\Phi(t) \rangle \langle \Phi(t) | ] \Phi_0(t)\rangle. 
\end{eqnarray*}
The right-hand side of this equation can be further bounded 
by the Robertson uncertainty relation as
\begin{eqnarray*}
| \langle \Phi_0(t) |[V, |\Phi(t)\rangle \langle \Phi(t)|] |\Phi_0(t)\rangle |
\leq 2 (\Delta V)_t \sqrt{p(t) -p(t)^2},   
\end{eqnarray*}
where $(\Delta V)_t := 
\sqrt{\langle \Phi_0(t) |V^2 |\Phi_0(t)\rangle - \langle \Phi_0(t) |V|\Phi_0(t)\rangle^2}
\leq \Vert V \Vert$. 
Thus we obtain, 
\begin{eqnarray*}
\frac{d}{dt} p(t) \leq 2 \Vert V \Vert \sqrt{p(t) -p(t)^2}, 
\end{eqnarray*}
whose solution shows that
\begin{eqnarray}
p(t) \geq \cos^2 \left( \frac{\Vert V\Vert}{\hbar} (t-t_0)\right). 
\label{eqn13} 
\end{eqnarray}
By combining (\ref{eqn12}) and (\ref{eqn13}), 
we conclude the proof. 
\end{proof}
As in the case of energy fluctuation, the 
right-hand side of the trade-off relation 
gets larger depending on the number of outcomes. 
In particular, for observables which have infinitely many outcomes 
we have the following theorem: 
\begin{theorem}
Let us consider a measurement model 
that measures a sharp 
observable with infinitely many outcomes. 
Its interaction $V$ must satisfy 
\begin{eqnarray*}
\Vert V\Vert \cdot \tau \geq \frac{\pi}{2} \hbar. 
\end{eqnarray*} 
\end{theorem}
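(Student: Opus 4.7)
The plan is to parallel the passage from Theorem \ref{th:mainmain} to its $N$-outcome generalization and then to its infinite-outcome corollary, only with the roles of $\Delta H_A$ and $\|V\|$ interchanged. That is, I will first upgrade the preceding theorem on interaction strength to an $N$-outcome version, and then take the limit $N\to\infty$.

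First, assume the PVM $\P$ has at least $N$ distinct outcomes, with mutually orthogonal eigenvectors $|0\rangle,|1\rangle,\ldots,|N-1\rangle$ satisfying $\P_n |n\rangle = |n\rangle$. Form the ``conjugate'' basis $|\tilde{k}\rangle := \frac{1}{\sqrt{N}}\sum_{n} e^{i\frac{2\pi k n}{N}}|n\rangle$. Exactly as in the proof of the $N$-outcome proposition in Section \ref{section4}, the hypothesis that $\P$ is measured perfectly forces the restricted system states $\mathrm{tr}_{\hik}[e^{-iH\tau/\hbar}(|\tilde{k}\rangle\langle\tilde{k}|\otimes \sigma(t_0))e^{iH\tau/\hbar}]$ to all coincide with a single state $\rho$ on $\hi$; the argument uses only the Cauchy-Schwarz bound together with $\langle n \otimes \phi(t_0)| e^{iH\tau/\hbar}(\id\otimes\E_m\otimes \id)e^{-iH\tau/\hbar}|n'\otimes \phi(t_0)\rangle = \delta_{nm}\delta_{n'm}$, so it is insensitive to whether Condition \ref{cond1} holds. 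Let $|\tilde{k}'\rangle := e^{-iH_S\tau/\hbar}|\tilde{k}\rangle$ be the freely evolved counterparts, which are still mutually orthogonal; then
\begin{eqnarray*}
\sum_{k=0}^{N-1} F(\rho, |\tilde{k}'\rangle\langle \tilde{k}'|)^2
= \sum_{k=0}^{N-1} \langle \tilde{k}' | \rho | \tilde{k}' \rangle \leq \mathrm{tr}[\rho] = 1,
\end{eqnarray*}
so some index $k_\ast$ has $F(\rho, |\tilde{k}_\ast'\rangle\langle \tilde{k}_\ast'|) \leq 1/\sqrt{N}$.

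Next, I repeat the Robertson-inequality argument of the preceding theorem with $|+\rangle$ replaced by $|\tilde{k}_\ast\rangle$. Define $|\Phi(t)\rangle := e^{-iH(t-t_0)/\hbar}|\tilde{k}_\ast\otimes \phi(t_0)\rangle$ and $|\Phi_0(t)\rangle := e^{-iH_0(t-t_0)/\hbar}|\tilde{k}_\ast\otimes \phi(t_0)\rangle$ with $H_0 = H_S + H_A$, and set $p(t) := |\langle \Phi_0(t)|\Phi(t)\rangle|^2$. The identity $i\hbar\, \dot p(t) = \langle \Phi_0(t)|[V,|\Phi(t)\rangle\langle\Phi(t)|]|\Phi_0(t)\rangle$ combined with the Robertson bound (as in the previous theorem) gives $\dot p(t) \leq 2\|V\|\sqrt{p(t)-p(t)^2}$, which integrates to
\begin{eqnarray*}
p(t) \geq \cos^{2}\!\left(\frac{\|V\|(t-t_0)}{\hbar}\right)
\end{eqnarray*}
for $t-t_0 \leq \pi\hbar/(2\|V\|)$. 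Since the partial trace does not decrease fidelity, $F(\rho,|\tilde{k}_\ast'\rangle\langle \tilde{k}_\ast'|)^2 \geq p(t_0+\tau)$, so $\cos^2(\|V\|\tau/\hbar) \leq 1/N$.

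The hypothesis of infinitely many outcomes means this inequality holds for every $N\in\mathbb{N}$. Letting $N\to\infty$ forces $\cos(\|V\|\tau/\hbar) \leq 0$, equivalently $\|V\|\cdot \tau \geq \frac{\pi}{2}\hbar$, which is the claim. The only step that demands care is verifying that the identical-system-state conclusion propagates unchanged to the $N$-outcome setting without Condition \ref{cond1}; as noted, the Cauchy-Schwarz plus measurement-accuracy chain used there has no dependence on the switching structure, so the transfer is routine. Everything else is a direct substitution into the Robertson-based differential inequality that powered the previous theorem.
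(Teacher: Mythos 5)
Your proposal is correct and follows essentially the same route as the paper: an $N$-outcome version via the conjugate basis $|\tilde{k}\rangle$, the fidelity bound $\min_k F(\rho,|\tilde{k}'\rangle\langle\tilde{k}'|)\leq 1/\sqrt{N}$, the Robertson-based differential inequality for $p(t)$ yielding $\cos(\Vert V\Vert\tau/\hbar)\leq 1/\sqrt{N}$, and the limit $N\to\infty$. Your explicit check that the identical-reduced-state argument does not rely on Condition \ref{cond1} is a point the paper leaves implicit, but the substance of the two proofs is the same.
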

\begin{proof}
For an arbitrary $N>0$ we can assume that the PVM to be measured has 
the elements $\P_1, \P_2,\ldots, \P_N$
with eigenvectors $\{|n\rangle\}_{n=1,2, \ldots, N}$ satisfying 
$\P_n |n\rangle = |n\rangle$. 
(If $\P$ has a continuous outcome set, 
we construct the above elements by discretizing the set.)  
We can show that the states $|\tilde{k}\rangle 
:= \frac{1}{\sqrt{N}}\sum_n e^{i \frac{ 2\pi k n}{N}}|n\rangle$ 
are mapped to an identical state, say $\rho$, on the system.  
We denote the states obtained by unperturbed (free) 
time evolution by $\{|\tilde{k}'\rangle\}$, which are orthonormalized.  
Now we have,  
\begin{eqnarray*}
\sum_{k=1}^{N } F(\rho, |\tilde{k}'\rangle \langle \tilde{k}'|)^2
\leq \mbox{tr}[\rho]=1. 
\end{eqnarray*} 
Thus, $min_k F(\rho, |\tilde{k}'\rangle \langle \tilde{k}'|) \leq \frac{1}{\sqrt{N}}$ 
is obtained. 
An argument in the previous proof is applied to show
\begin{eqnarray*} 
\cos \left( \frac{\Vert V\Vert}{\hbar} \tau \right) 
\leq \frac{1}{\sqrt{N}}. 
\end{eqnarray*}
As $N$ was arbitrary, it ends the proof.  
\end{proof}
In \cite{MiyaderaInteraction2011}, a similar bound on the interaction strength 
was derived that is related to the Hamiltonian $H_S$. 
The relationship of the present work with it will be discussed elsewhere. 
\section{Spacetime uncertainty relation}\label{section6}
As an application of our results on the energy variance required for 
a measurement, we study the so-called spacetime uncertainty relation. 
It is believed that spacetime is not 
a simple continuum at the microscopic scale. 
A quantum effect is thought to impose some
limitation on the microscopic spacetime structure 
\cite{Mead}. 
This limitation has been proposed in various ways including 
string theory \cite{Yoneya}. 
In \cite{FDR}, Doplicher, Fredenhagen and Roberts 
gave an ingenious, yet heuristic argument 
by combining 
the quantum measurement of local observables with general relativity, 
as explained below. 
We may suppose that a spacetime region, say $D$, has its operational 
meaning if one is able to measure a local observable located at 
$D$. 
They argue that one needs to concentrate 
energy 
to measure an observable in $D$. 
If we assume that this energy can be identified with 
mass, for a sufficiently small $D$ the mass can become 
too large to avoid black hole formation. 
As a black hole prevents the extraction of information from  
the region, the region $D$ looses its operational meaning. 
(The authors further propose that the spacetime be described by 
noncommutative geometry.)  
Their reasoning with respect to the usage of 
general relativity is heuristic. In addition, 
the first part of this reasoning still contains an ambiguity. 
We apply our theorem to strengthen 
this first part of the energy-time uncertainty relation 
between apparatus energy fluctuation and measurement time. 
\par
We consider a nonrelativistic discrete space $\mathbb{Z}^d$. 
Thus the spacetime is $\mathbb{Z}^d \times \mathbb{R}$.  
There is no apriori reason that the whole part of the apparatus must 
be located in a small region specified by $D$. 
That is, one may consider a large apparatus that extends 
outside the small region to avoid energy concentration. 
Or, even if the apparatus is localized, it may interact with an infinitely extended 
environment that has large energy fluctuation. 
(We investigate the latter possibility below.) 
This loophole, however, is not applied if the locality 
of interaction is taken into consideration. 
Let us consider an apparatus located at the origin $0$ of the lattice 
(see FIG. \ref{figure_lattice}).
Other sites represent the environment. 
For each $x\in \mathbb{Z}^d$ there is a Hilbert space $\hi_x$ that is not 
necessarily finite dimensional. The background lattice structure $\mathbb{Z}^d$ 
introduces a natural Euclidean distance between sites denoted by 
$d(x, y)=\sqrt{\sum_i (x_i -y_i)^2}$.
For each finite region $\Lambda \subset \mathbb{Z}$, 
we consider a Hilbert space $\otimes_{x\in \Lambda} \hi_x$ and 
an observable algebra $\mathfrak{A}(\Lambda) \subset  
\mathbf{B}(\otimes_{x\in \Lambda} \hi_x)$. 
(See \cite{BR} for a precise 
definition of the mathematical terminology.)   
For $\Lambda_1 \subset \Lambda_2$, a natural inclusion relation 
$\mathfrak{A}(\Lambda_1) \subset \mathfrak{A}(\Lambda_2)$ is introduced.  
Thus an arbitrary pair of observables $A_j 
\in \mathfrak{A}(\Lambda_j)$ $(j=1,2)$
for any disjoint regions $\Lambda_1$ and $\Lambda_2$
satisfies 
$[A_1,A_2]=0$.  
The total observable algebra is defined by 
$\mathfrak{A}:= \overline{\cup_{\Lambda} \mathfrak{A}(\Lambda)}^{\Vert \cdot\Vert}$, 
where the summation is taken over all finite regions and 
it is made closed with respect to norm topology.  
The dynamics is specified by a local Hamiltonian. 
On each site a self Hamiltonian $h_x$ 
acting only on $\hi_x$ is defined. 
The interaction must be local. For simplicity, we 
assume that the interaction is nearest-neighbor. 
That is, only for each pair $\{x,y\}$ satisfying 
$d(x,y)=1$, an interaction 
$\Phi(\{x,y\})$ that acts on $\hi_x \otimes \hi_y$ is non-trivial.
For each finite region $\Lambda \subset \mathbb{Z}^d$, 
let $H_{\Lambda}$ denote its ``box Hamiltonian" defined by 
\begin{eqnarray*}
H_{\Lambda} = \sum_{x \in \Lambda} h_x+ \sum_{\{x,y\} \subset \Lambda} 
\Phi(x, y). 
\end{eqnarray*}
This box Hamiltonian defines the dynamics (in the Heisenberg picture) by 
$\beta^{\Lambda}_t(\A) = e^{i \frac{H_{\Lambda}}{\hbar}t}
\A e^{-i \frac{H_{\Lambda}}{\hbar}t}$, 
where $\beta^{\Lambda}_t$ is a one-parameter $*$-automorphism group 
$(t\in \mathbb{R})$. 
The infinite volume limit on the box converges and defines 
the dynamics, $\beta_t$, by 
\begin{eqnarray*}
\beta_t(\A) = \lim_{\Lambda \to \mathbb{Z}^d} \beta^{\Lambda}_t(\A), 
\end{eqnarray*}   
where the limit is taken with respect to norm topology. 
Let us consider a system to be measured. 
We assume that it interacts only with an apparatus and has
a Hilbert space $\hi$. 
Its own dynamics is governed by a system Hamiltonian $H_S$, 
which defines a one parameter group of $*$-automorphism 
$\gamma_t$ by $\gamma_t(\B) = e^{i \frac{H_S}{\hbar}t}
\B e^{i \frac{H_S}{\hbar}t}$ for $\B \in \mathbb{B}(\hi)$. 
The interaction is given by an interaction Hamiltonian $V$ acting only on 
$\hi \otimes \hi_0$. 
The full dynamics is given by the norm limit of the 
``box Hamiltonian" approximated dynamics 
$\alpha^{\Lambda}_t(\A) = e^{i \frac{H_S + H_{\Lambda}+V}{\hbar}t}
\A e^{-i \frac{H_S+ H_{\Lambda} + V}{\hbar}t}$ 
as
\begin{eqnarray*}
\alpha_t(\A) = \lim_{\Lambda \to \mathbb{Z}^d}
\alpha_t^{\Lambda}(\A). 
\end{eqnarray*}
\par
Let us reformulate Condition \ref{cond1} 
to treat this infinite quantum system. 
The total observable algebra is written as 
$\mathbf{B}(\hi) \otimes \mathfrak{A}$.   
We define $\alpha^0_t$ by 
$\alpha^0_t = \gamma_t \otimes \beta_t$ which describes 
the dynamics without interaction between the system and the apparatus-plus-environment. 
Let us denote by $\omega$ an initial state of the 
apparatus-plus-environment at time $t=0$. 
(For simplicity we set $t_0=0$.) 
An initial state of the composite system is written as 
$\rho \otimes \omega$ with a state $\rho$ of the system. 
Its real time evolution (in the Schr\"odinger picture) is represented as
$(\rho \otimes \omega)\circ \alpha_t$. 
Condition \ref{cond1} is now replaced by the following condition: 
\begin{figure}
\includegraphics[width=8cm]{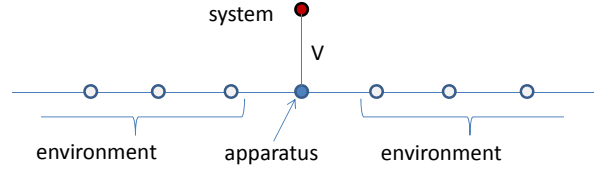}
\caption{The origin of $\mathbb{Z}^d$ represents an apparatus 
which interacts with a system and environment.}
\label{figure_lattice}
\end{figure}
\begin{condition}\label{cond4}(No-interaction up to switching-on time) 
For $t\leq 0$ and an arbitrary state $\rho$ of the system, $(\rho \otimes \omega) \circ \alpha_t 
= (\rho \otimes \omega) \circ \alpha^0_t$ holds. 
\end{condition}
For $\tau>0$, 
let us consider the states $\Theta_0$ and $\Theta_{\tau}$ defined by 
$\Theta_0 = \rho \otimes \omega$ and 
$\Theta_{\tau} = \rho \otimes ( \omega \circ \beta_{-\tau})$.   
Due to Condition \ref{cond4}, the latter state can be rewritten as
\begin{eqnarray*}
\Theta_{\tau} =
( (\rho \circ \gamma_{\tau}) \otimes \omega ) \circ 
\alpha^0_{-\tau}
= ( (\rho \circ \gamma_{\tau}) \otimes \omega ) \circ 
\alpha_{-\tau}. 
\end{eqnarray*}
We consider states at time $t=\tau$ 
for initial states $\Theta_0$ and $\Theta_{\tau}$.
Although $\Theta_0 \circ \alpha_{\tau}$ may 
become complicated, $\Theta_{\tau} \circ \alpha_{\tau}$ 
is written in a simple form as 
\begin{eqnarray*}
\Theta_{\tau} \circ \alpha_{\tau} = 
(\rho \circ \gamma_{\tau}) \otimes \omega. 
\end{eqnarray*}  
Suppose that the dynamics $\alpha_{t}$ describes 
a measurement process for a PVM $\P=\{\P_n\} 
\subset \mathbf{B}(\hi)$
with its measurement time duration $\tau$.
Let us take the Gelfand-Naimark-Segal (GNS) representation \cite{BR} of 
the apparatus plus environment. 
That is, a state $\omega$ is represented 
by using a Hilbert space $\hik$, a normalized 
vector $|\Omega\rangle \in \hik$ and 
a representation $\pi$ 
as
$\omega(A) = \langle \Omega | \pi_{\omega} (A) \Omega\rangle$.  
Let us consider the normalized vectors $|\phi_n\rangle \in \hi$ satisfying 
$\P_n|\phi_n\rangle = |\phi_n\rangle$. Each $|\phi_n\rangle$ defines 
a state $\rho_n(B) = \langle \phi_n |B|\phi_n\rangle$ 
on the system for $B \in \mathbf{B}(\hi)$. 
As the dynamics describes a measurement process of 
$\P$, there exists a POVM $\E= \{\E_n\} \subset
\mathbf{B}(\hi_0) \subset \mathfrak{A}$ 
such that 
\begin{eqnarray*}
(\rho_n \otimes \omega)\circ \alpha_{\tau}(\id \otimes \E_m) 
= \langle \phi_n\otimes \Omega| \pi(
\alpha_{\tau}(\id \otimes \E_m))|\phi_n \otimes \Omega\rangle=
\delta_{nm}, 
\end{eqnarray*}
where $\pi= id \otimes \pi_{\omega}$ is a representation 
of the total observable algebra. 
Let us consider a pair of states $\{\rho_{+}=|+\rangle \langle +|, 
\rho_-=|-\rangle \langle -|\}$ on the system defined by 
$|\pm\rangle = \frac{1}{\sqrt{2}}(|\phi_0 \rangle \pm |\phi_1\rangle)$.
For any $\A \in \mathbf{B}(\hi)$, it holds that 
\begin{eqnarray*}
|(\rho_{+}\otimes \omega - \rho_{-}\otimes \omega)
\circ \alpha_{\tau} ( \A \otimes \id) |
&=&2 | Im \langle \phi_0 \otimes \Omega | 
\pi(\alpha_{\tau}(\A \otimes \id))|\phi_1 \otimes \Omega\rangle | 
\\
&\leq& 2 | \langle \phi_0 \otimes \Omega| 
\pi(\alpha_{\tau}(\A \otimes \id) )|\phi_1\otimes \Omega\rangle | 
\\
&\leq& 2 \sum_n |\langle \phi_0 \otimes \Omega |
\pi(\alpha_{\tau}(\A \otimes \E_n))|\phi_1 \otimes \Omega \rangle |
\\
&= & 2 \sum_n |
\langle \phi_0 \otimes \Omega | 
\pi(\alpha_{\tau}(\id \otimes \E^{1/2}_n))
\pi(\alpha_{\tau}(\A \otimes \id))
\pi(\alpha_{\tau}(\id \otimes \E^{1/2}_n))|\phi_1 \otimes \Omega\rangle |
\\
&\leq&
2 \Vert \A \Vert 
\sum_n \langle \phi_0 \otimes \Omega| 
\pi(\alpha_{\tau}(\id \otimes \E_n))|\phi_0 \otimes \Omega\rangle^{1/2}
\langle \phi_1 \otimes \Omega | \pi(\alpha_{\tau}
(\id \otimes \E_n))|\phi_1 \otimes \Omega\rangle^{1/2}
\\
&=&
2 \Vert \A\Vert \sum_n 
\delta_{0n} \delta_{1n} =0.
\end{eqnarray*}
Let $\rho_{\pm}(\tau)$ denote the restricted states 
of $(\rho_{\pm} \otimes \omega) \circ \alpha_{\tau}$ on 
the system.  
We can conclude that 
the pair of states $\rho_+(\tau)$ and $\rho_-(\tau)$ 
satisfies $\rho_+(\tau) = \rho_-(\tau)$. 
Therefore we can introduce $\rho(\tau)$ 
by $\rho(\tau) = \rho_{+}(\tau) = \rho_{-}(\tau)$. 

We define a pair of states $|\pm'\rangle \langle \pm'|$ 
by $|\pm'\rangle =e^{-\frac{H_S \tau}{\hbar}} |\pm\rangle$. 
The argument in the previous section is applied to derive 
$\min\{F(\rho(\tau), |+'\rangle \langle +'|), 
F(\rho(\tau), |-'\rangle \langle -'|)\} \leq \frac{1}{\sqrt{2}}$.
Hereafter we 
assume $F(\rho(\tau), |+'\rangle \langle +'|) \leq \frac{1}{\sqrt{2}}$. 

In the present case, since the environment is infinite, 
the energy fluctuation of the environment is 
usually infinite as well. Thus the direct application of 
our previous result gives only a trivial inequality. 
Considering the locality of the model, we can see that  
the region relevant to the dynamics of the system 
is finite, because the information propagates essentially at 
a finite speed. 
This finite information propagation speed $V_{\Phi}<\infty$ is 
given by the so-called Lieb-Robinson bound 
\cite{BR,NS}. 
This bound allows us to approximate the dynamics 
by that given by a ``box Hamiltonian". 
For an arbitrary $\epsilon >0$, and $v\geq V_{\Phi}$, there is a finite
region $D$ such that for any $\A \in \mathbf{B}(\hi)$ and $t\in \mathbb{R}$,  
\begin{eqnarray*}
\Vert \alpha_t(A) - 
\alpha_t^{\Lambda_{v|t|, D}}(A) \Vert 
\leq \epsilon \Vert \A\Vert,  
\end{eqnarray*}
where 
$\Lambda_{v|t|, D} := 
\{x |\ ^{\exists} y \in D,  d(x, y)<v|t|\}$ 
represents a ``box".   
Let us consider the approximated dynamics given by the box 
$\Lambda_{V_{\Phi} \tau, D}$. 
The states $\rho_{\pm}$ evolve
into $\rho'_{\pm}(\tau)$ in time $\tau$ 
following this approximated dynamics.
For the box Hamiltonian 
$H_{\Lambda_{V_{\Phi}\tau, D}}$, 
we apply Theorem \ref{ETtheorem}
to obtain 
\begin{eqnarray*}
\cos\left(\frac{\Delta H_{\Lambda_{V_{\Phi}\tau, D}}}{\hbar}\tau
\right)\leq
F(\rho'_{+}(\tau), |+'\rangle \langle +'|), 
\end{eqnarray*} 
which is equivalent to 
\begin{eqnarray}
\frac{\Delta H_{\Lambda_{V_{\Phi}\tau, D}}}{\hbar} \tau 
\geq \mbox{Arccos} F(\rho'_{+}(\tau), |+'\rangle \langle +'|).
\label{Arccoseq}
\end{eqnarray}
To estimate how well the state $\rho'_+(\tau)$   
approximates the real state 
$\rho_+(\tau)$, we use a result obtained by Rastegin 
\cite{Rastegin}. 
\begin{lemma}\label{lemmaRastegin}
For states $\rho_0, \rho_1, \sigma$, it holds that 
\begin{eqnarray*}
\mbox{Arccos} F(\rho_0, \rho_1) \leq \mbox{Arccos}
F(\rho_0, \sigma) + \mbox{Arccos} F(\rho_1, \sigma). 
\end{eqnarray*}
\end{lemma}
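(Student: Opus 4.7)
The plan is to recognize that $A(\rho,\sigma):=\arccos F(\rho,\sigma)$ is the \emph{Bures angle}, and that the claim is precisely the triangle inequality for this angle. The natural route is via Uhlmann's theorem, which lets us transport the problem from mixed states on $\hi$ up to pure states on an enlarged space $\hi\otimes\hik$, where the corresponding quantity $\arccos|\langle\psi|\phi\rangle|$ is the Fubini--Study distance on rays and is known to be a bona fide metric (being the geodesic distance on projective Hilbert space).

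Concretely, I would proceed as follows. First, fix an auxiliary Hilbert space $\hik$ large enough to purify all three states, and fix a purification $|\phi\rangle\in\hi\otimes\hik$ of $\sigma$. By Uhlmann's theorem applied to the pair $(\rho_0,\sigma)$ with this chosen purification $|\phi\rangle$, there exists a purification $|\psi_0\rangle$ of $\rho_0$ on $\hi\otimes\hik$ with
\begin{eqnarray*}
|\langle\psi_0|\phi\rangle|=F(\rho_0,\sigma).
\end{eqnarray*}
Applying Uhlmann's theorem a second time to the pair $(\rho_1,\sigma)$ with the \emph{same} $|\phi\rangle$, I obtain a purification $|\psi_1\rangle$ of $\rho_1$ with $|\langle\psi_1|\phi\rangle|=F(\rho_1,\sigma)$. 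The other direction of Uhlmann then gives $F(\rho_0,\rho_1)\geq|\langle\psi_0|\psi_1\rangle|$, since $F(\rho_0,\rho_1)$ is the maximum over all pairs of purifications.

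The second step is the Fubini--Study triangle inequality on the unit sphere of $\hi\otimes\hik$: for any unit vectors $|\psi_0\rangle,|\phi\rangle,|\psi_1\rangle$,
\begin{eqnarray*}
\arccos|\langle\psi_0|\psi_1\rangle|\leq\arccos|\langle\psi_0|\phi\rangle|+\arccos|\langle\phi|\psi_1\rangle|.
\end{eqnarray*}
This can be checked directly (e.g., by reducing, via suitable phase choices, to the spherical law of cosines applied to a triangle on $S^2$ whose vertex angles are $\arccos|\cdot|$), or simply by invoking the fact that the Fubini--Study distance on projective Hilbert space is a metric. Combining this with the inequalities of the previous paragraph, and using that $\arccos$ is monotone decreasing on $[0,1]$, yields
\begin{eqnarray*}
\arccos F(\rho_0,\rho_1)\leq\arccos|\langle\psi_0|\psi_1\rangle|\leq\arccos F(\rho_0,\sigma)+\arccos F(\rho_1,\sigma),
\end{eqnarray*}
which is the desired inequality.

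The only genuinely delicate point is the simultaneous use of Uhlmann's theorem in the two pairs with a \emph{common} purification of the middle state $\sigma$; this is exactly what the standard formulation of Uhlmann's theorem guarantees (for a fixed purification of one state, the fidelity is attained by a suitable purification of the other on the same extended space). Once this is in hand, everything reduces to the classical triangle inequality for angles between rays, so no further estimates are needed.
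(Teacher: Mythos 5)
Your proof is correct. Note that the paper does not actually prove this lemma: it is quoted as a known result with a citation to Rastegin, so there is no internal argument to compare against. Your route --- purify $\sigma$ once, invoke Uhlmann's theorem twice against that fixed purification to obtain purifications $|\psi_0\rangle,|\psi_1\rangle$ of $\rho_0,\rho_1$ attaining $F(\rho_0,\sigma)$ and $F(\rho_1,\sigma)$, lower-bound $F(\rho_0,\rho_1)$ by $|\langle\psi_0|\psi_1\rangle|$, and finish with the triangle inequality for the Fubini--Study angle --- is the standard textbook proof that the Bures angle is a metric (it is essentially the argument by which Nielsen and Chuang show the ``angle'' $\arccos F$ satisfies the triangle inequality). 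The two points you flag as delicate are indeed the only ones needing care, and both are sound: the one-sided form of Uhlmann's theorem (the fidelity is attained over purifications of one state with the other's purification held fixed) is valid once the ancilla is large enough, which you arrange explicitly; and the Fubini--Study triangle inequality follows from writing $|\psi_i\rangle=\cos\theta_i\,|\phi\rangle+\sin\theta_i\,|\chi_i\rangle$ with $|\chi_i\rangle\perp|\phi\rangle$, which gives $|\langle\psi_0|\psi_1\rangle|\geq\cos\theta_0\cos\theta_1-\sin\theta_0\sin\theta_1=\cos(\theta_0+\theta_1)$ when $\theta_0+\theta_1\leq\pi/2$, the complementary case being trivial since $\arccos$ of a number in $[0,1]$ never exceeds $\pi/2$. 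So your argument supplies a complete, self-contained proof of a statement the paper leaves to the literature.
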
 
Applying this to (\ref{Arccoseq}), we obtain 
\begin{eqnarray*}
\frac{\Delta H_{\Lambda_{V_{\Phi}\tau, D}}}{\hbar} \tau 
\geq \mbox{Arccos}F(\rho_+(\tau), |+\rangle \langle +|)
-\mbox{Arccos}F(\rho_+(\tau), \rho'_+(\tau)).
\end{eqnarray*}
The last term on the right-hand side is bounded by the use of  
a relation between fidelity and trace distance,  
$1-F(\rho_+(\tau), \rho'_+(\tau))\leq D(\rho_+(\tau), 
\rho'_+(\tau))
:= \sup_{\A: \Vert \A\Vert=1} 
|\mbox{tr}[(\rho_+(\tau) -\rho'_+(\tau))\A]|
$. Since it holds that  
\begin{eqnarray*}
1- F(\rho_+(\tau), \rho'_+(\tau))^2
= (1-F(\rho_+(\tau), \rho'_+(\tau)
(1+F(\rho_+(\tau), \rho'_+(\tau))) 
\leq 2 D(\rho_+(\tau), \rho'_+(\tau)), 
\end{eqnarray*}
we obtain 
\begin{eqnarray*}
\mbox{Arccos}F(\rho_+(\tau), \rho'_+(\tau))
\leq \mbox{Arcsin} \sqrt{ 2 D(\rho_+(\tau), \rho'_+(\tau))}\leq \frac{\pi}{2} \sqrt{2D(\rho_+(\tau), \rho'_+(\tau))}.
\end{eqnarray*}
As $F(\rho_{+}(\tau), |+'\rangle \langle +'|)\leq \frac{1}{\sqrt{2}}$ holds, 
we conclude 
%
\begin{eqnarray*}
\Delta H_{\Lambda_{V_{\Phi}\tau, D}} \cdot \tau \geq 
\frac{\pi \hbar}{4}- \frac{\pi \hbar}{2}\sqrt{2 \epsilon},  
\end{eqnarray*}
which represents a starting point in  
the discussion of the spacetime uncertainty relation. 
\par
Let us give a rough sketch on the heuristic derivation of the 
spacetime uncertainty relation following \cite{FDR}. 
For the relativistic dynamics, $V_{\Phi}$ 
in the above nonrelativistic model 
could be replaced by the speed of light $c$ 
and $D$ represents a relevant region of the local observable $\A$, and  
$\epsilon =0$. 
Thus we obtain for $\Gamma_D(\tau)= \{D+{\bf x} \in \mathbb{R}^3| 
{\bf x}^2 \leq (c\tau)^2\}$, 
\begin{eqnarray*}
\Delta H_{\Gamma_D(\tau)} \tau \geq \frac{\pi \hbar}{4}. 
\end{eqnarray*}  
Assume that $D$ is a sphere with radius $R$. 
Then the volume of $\Gamma_D(\tau)$ 
is equal to 
$\frac{\pi}{3} (R+c\tau)^3$. 
Let us identify $\Delta H_{\Gamma_D(\tau)}$ with a mass $M c^2$ in 
the region $\Gamma_D(\tau)$. 
The above inequality gives 
\begin{eqnarray*}
M c^2 \tau \geq \frac{\pi \hbar}{4}.
\end{eqnarray*} 
For the mass to avoid the formation of a blackhole, 
$R + c\tau$ must exceed the Schwarzschild radius 
and 
\begin{eqnarray*}
R+c\tau \geq \frac{2 GM}{c^2}
\end{eqnarray*}
holds. 
Thus we have the limitation, 
\begin{eqnarray*}
\tau (R+c\tau) \frac{c^4}{2G}\geq \frac{\pi\hbar}{4}.   
\end{eqnarray*}
For small $\tau (\ll \frac{R}{c})$, this implies
\begin{eqnarray*}
\tau R \geq \frac{\pi \hbar G}{2 c^4}. 
\end{eqnarray*}
\section{Discussions}\label{section7}
In this paper, we studied quantum measurements as physical processes.
Each measurement is described as an interaction between 
a system and an apparatus. We investigated the energy fluctuation of the apparatus 
and the strength of interaction so that the system-plus-apparatus 
is regarded as a closed quantum system. 
We first examined the so-called standard model 
of measurement to find that this model needs an external system 
that switches on its measurement interaction. In this sense, the model is 
not closed.   
For the system-plus-apparatus 
to be genuinely closed, the quantum side must be made large enough 
to be capable of switching on the interaction autonomously. 
In this setting, we showed a trade-off relation between 
the measurement time duration and the energy fluctuation of the apparatus. 
This relation was obtained by combining two uncertainty relations; 
the information-disturbance relation and Mandelstam-Tamm uncertainty 
relation. 
We applied this relation to strengthen an argument regarding the spacetime 
uncertainty relation. 
In addition, we considered the strength of interaction between 
the system and the apparatus, and  
showed that there also exists a trade-off between 
the strength and the measurement time duration.
\par
For an apparatus to evolve completely freely up to a 
certain time and then 
switch on a perturbation, the Hamiltonian must be two-sided
unbounded. This unbounded nature is often unwelcome  
as it 
forces the apparatus to be an infinite or unstable system. 
To avoid this situation, we may weaken Condition \ref{cond1} by allowing some error
(see Example \ref{example:real}). This kind of weakening of the conditions has been often employed 
in studies on quantum clocks \cite{Peres, BDM}. 
A detailed study on more ``physical" measurement models will be treated elsewhere. 
\par
A quantum apparatus in our problem can be understood as a device performing 
a specified task. This sort of device is studied in a context of 
programmable quantum gates 
\cite{HZbook, Nielsen}
and the optimality of control 
has been recently discussed 
\cite{Bisio}. 
It would be interesting to 
extend our result so as to be applicable to various tasks other than 
measurement. 
For instance the famous Einstein photon box 
\cite{BuschTEUR, Busch1} could, we hope, be revisited.  
\par
While our treatment of spacetime uncertainty relations agrees with 
physical intuition, there remain things to be improved. 
The discrete space 
is the most crucial drawback in our model. We should treat 
the spacetime as a continuum, or as a 
flat Minkowski spacetime as the first 
natural setting. In order to achieve this, the time translation group 
$\mathbb{R}$ should be generalized to 
the Poincar\'{e} group $\mathcal{P}_+^{\uparrow}$.
Then the localization of event will be made much clearer and 
a covariant spacetime uncertainty relation is expected to be obtained. 
Such an extension and refinement of our analysis to $\mathcal{P}_+^{\uparrow}$ 
(or general Lie group $G$) is an important problem which we hope will be
studied elsewhere. 
\par
As a final remark, we want to point out that an external classical clock is 
assumed in our study. If we shift the interface so that the quantum side 
contains a clock, the time is not anymore 
an extrinsic parameter but instead will become 
a relative quantity depending on the quantum clock \cite{PageWootters, 
Milburn, Giovannetti}. 
A related concept, the notion of reference frame, has been recently 
studied extensively 
\cite{BRS, mlbshort,VSP}. 
%
\vspace{2mm}
\\
{\bf Acknowledgements}
I am grateful to anonymous referees for valuable comments, 
and to Leon Loveridge for many helpful remarks.  
This work was supported by KAKENHI Grant Number 15K04998. 
\appendix
\section{Proof of Theorem \ref{theoremappendix}
}\label{AppendixA}
\begin{proof}
For simplicity, we assume that $t_0=0$ and that 
$\sigma(0)(=|\phi(0)\rangle \langle \phi(0))|$ is pure. 
Let us consider two states $\Theta_0(0)$ and $\Theta_t(0)$ $(0\leq t\leq \frac{\pi \hbar}{2 \Delta H_c})$ defined by 
\begin{eqnarray*}
\Theta_0(0)&:=&\rho(0)\otimes |\phi(0)\rangle \langle \phi(0)| 
\\
\Gamma_t(0)&:=&\rho(0) \otimes |\phi(-t)\rangle \langle \phi(-t)|. 
\end{eqnarray*}
These states evolve with the Hamiltonian $H=H_S+H_A+V$. 
Let us denote the states at time $t$ by $\Theta_0(t)$ and $\Gamma_t(t)$. 
While $\Theta_0(t)$ may have a complicated form, $\Gamma_t(t)$ has 
a simple form,
\begin{eqnarray*}
\Gamma_t(t)=\rho^0(t)\otimes |\phi(0)\rangle \langle \phi(0)|, 
\end{eqnarray*}
where we used 
Lemma \ref{lemma1}.
Because the fidelity between two states is invariant 
under unitary evolution \cite{Nielsen}, it follows that 
\begin{eqnarray*}
F(\Theta_0(0), \Gamma_t(0))
=F(\Theta_0(t), \Gamma_t(t)). 
\end{eqnarray*}
The left-hand side of the above equation becomes 
\begin{eqnarray*}
F(\Theta_0(0), \Gamma_t(0))
=|\langle \phi(0)| \phi (-t)\rangle |
\end{eqnarray*}
and the right-hand side is bounded as 
\begin{eqnarray*}
F(\Theta_0(t), \Gamma_t(t))\leq F(\rho(t), \rho^0(t)), 
\end{eqnarray*}
where we utilized the fact that the fidelity decreases for restricted states \cite{Nielsen}. 
Thus it holds that 
\begin{eqnarray*}
|\langle \phi(0)|\phi(-t)\rangle |\leq F(\rho(t),\rho^0(t)). 
\end{eqnarray*} 
The left-hand side of this inequality represents the speed of 
time evolution of the apparatus and is bounded. 
Let us fix a value $0 \leq F_0 \leq 1$ and 
denote by $\tau_F$ the minimum time $t$ 
attaining $F(\rho(t), \rho^0(t)) \leq F_0$. 
Then we obtain,   
\begin{eqnarray*}
\tau_{F_0} \cdot \Delta_{\alpha}(H) 
\geq 2 \hbar
\arccos
\frac{F_0 + 1-\alpha}{\alpha}. 
\end{eqnarray*} 
For this process to describe 
a measurement process, there must be an initial state 
attaining $F(\rho(t), \rho^(0))\leq \frac{1}{\sqrt{2}}$. 
Thus we obtain 
\begin{eqnarray*}
\tau \cdot \Delta_{\alpha}(H) 
\geq 2 \hbar \arccos \frac{\frac{1}{\sqrt{2}}+ 1 -\alpha}{\alpha}.
\end{eqnarray*}
\end{proof}
\begin{corollary}
For $\alpha = \frac{1}{2}\left(\frac{1}{\sqrt{2}}+1\right)$, 
it holds that 
\begin{eqnarray*}
\tau \cdot \Delta_{\alpha}(H) 
\geq \frac{2 \pi \hbar}{3}. 
\end{eqnarray*}
\end{corollary}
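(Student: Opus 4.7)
The plan is to adapt the proof of Theorem \ref{ETtheorem} almost line-for-line, replacing the appeal to the Mandelstam-Tamm inequality (which controls the standard deviation $\Delta H_A$) with the Uffink-type overall-width inequality stated in the paragraph preceding the theorem. The structural backbone — comparing an actual dynamics to a ``timeshifted'' dynamics that is forced to be trivial by Condition \ref{cond1}, then transferring the resulting apparatus-side overlap to a system-side fidelity via monotonicity under partial trace — carries over unchanged.

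Concretely, I would first purify $\sigma(t_0)$ to a vector $|\phi(t_0)\rangle$ and, following the proof of Theorem \ref{ETtheorem}, consider the pair
\begin{eqnarray*}
\Theta_0(t_0) &:=& \rho(t_0) \otimes |\phi(t_0)\rangle\langle \phi(t_0)|, \\
\Gamma_t(t_0) &:=& \rho(t_0) \otimes |\phi(t_0-(t-t_0))\rangle\langle \phi(t_0-(t-t_0))|,
\end{eqnarray*}
evolve both under the full Hamiltonian $H$, and use that Condition \ref{cond1} together with Lemma \ref{lemma1} forces $\Gamma_t(t) = \rho^0(t)\otimes |\phi(t_0)\rangle\langle \phi(t_0)|$. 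Unitary invariance of fidelity gives
\begin{eqnarray*}
|\langle \phi(t_0) | \phi(t_0-(t-t_0))\rangle| \;\leq\; F(\rho(t),\rho^0(t)),
\end{eqnarray*}
by the same partial-trace monotonicity step used in Theorem \ref{ETtheorem}.

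Next I would invoke the information-disturbance argument already established in the proof of Theorem \ref{th:mainmain}: for a sharp measurement of a PVM with at least two outcomes there exists a conjugate initial state (one of $|\pm\rangle$) for which, at the switch-off time $t_0+\tau$, the reduced state $\rho(t_0+\tau)$ satisfies $F(\rho(t_0+\tau),\rho^0(t_0+\tau))\leq \tfrac{1}{\sqrt{2}}$. Taking this state as $\rho(t_0)$ in the inequality above yields
\begin{eqnarray*}
|\langle \phi(t_0) | \phi(t_0-\tau)\rangle| \;\leq\; \tfrac{1}{\sqrt{2}}.
\end{eqnarray*}
Since for $t\leq t_0$ the purified apparatus evolves freely under $H_A$ (the point of Condition \ref{cond1}), this is precisely a statement about the autonomous evolution generated by the apparatus Hamiltonian acting on $|\phi(t_0)\rangle$. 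Applying the Uffink inequality quoted just before the theorem with $\beta = \tfrac{1}{\sqrt{2}}$ then gives, whenever the hypothesis $\beta \leq 2\alpha-1$ holds (equivalently $\alpha \geq \tfrac12(1+\tfrac{1}{\sqrt{2}})$),
\begin{eqnarray*}
\tau\cdot \Delta_\alpha(H) \;\geq\; 2\hbar\,\arccos\frac{\tfrac{1}{\sqrt{2}} + 1-\alpha}{\alpha},
\end{eqnarray*}
which is the desired bound.

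The only subtlety — and therefore the main thing to verify carefully — is the interplay between the admissible range of $\alpha$ and the ``$\beta = \tfrac{1}{\sqrt{2}}$'' that the information-disturbance argument forces on us. The hypothesis $\beta \leq 2\alpha-1$ in the Uffink relation is exactly what translates into the restriction $\alpha \geq \tfrac12(1+\tfrac{1}{\sqrt{2}})$ in the theorem statement, so one must check that this constraint is indeed tight with the conjugate-state construction. Nothing else in the argument introduces new inequalities, so the rest reduces to bookkeeping on the fidelity chain, which is identical to that of Theorem \ref{ETtheorem}.
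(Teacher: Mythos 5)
Your derivation of the general inequality $\tau\cdot\Delta_\alpha(H)\geq 2\hbar\arccos\frac{\frac{1}{\sqrt{2}}+1-\alpha}{\alpha}$ follows the paper's Appendix~A essentially step for step (the $\Theta_0$/$\Gamma_t$ comparison, Lemma~\ref{lemma1}, unitary invariance and partial-trace monotonicity of the fidelity, the information--disturbance bound $F\leq\frac{1}{\sqrt{2}}$, then Uffink's relation with $\beta=\frac{1}{\sqrt{2}}$); that part is sound and is exactly Theorem~\ref{theoremappendix}. The gap is that you stop there and declare the general inequality to be ``the desired bound'': the statement to be proved is the corollary, i.e.\ the specific numerical claim at $\alpha=\frac{1}{2}\bigl(1+\frac{1}{\sqrt{2}}\bigr)$, and you never carry out that substitution. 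If you do, you find $\frac{1}{\sqrt{2}}+1-\alpha=\frac{1}{2}+\frac{1}{2\sqrt{2}}=\alpha$, so the right-hand side is $2\hbar\arccos(1)=0$: the bound degenerates to a triviality precisely because this $\alpha$ sits on the boundary $\beta=2\alpha-1$ of Uffink's hypothesis --- the very point you flagged as ``the main thing to verify'' but did not verify.

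Worse, the constant $\frac{2\pi\hbar}{3}$ cannot be recovered from this route for \emph{any} admissible $\alpha$: since $\frac{\frac{1}{\sqrt{2}}+1-\alpha}{\alpha}$ is decreasing in $\alpha$, the right-hand side of the theorem is maximized at $\alpha=1$, where it equals $2\hbar\arccos\frac{1}{\sqrt{2}}=\frac{\pi\hbar}{2}<\frac{2\pi\hbar}{3}$. The paper itself offers no computation for the corollary, treating it as immediate substitution into Theorem~\ref{theoremappendix}, so the discrepancy appears to be an error in the stated corollary rather than something your argument could repair; but as written your proof does not establish the claimed inequality. To be honest about what is actually provable you must either report that the substitution yields the vacuous bound $0$, or replace the corollary by a consistent instance of the theorem (e.g.\ $\alpha=1$ gives $\tau\cdot\Delta_{1}(H)\geq\frac{\pi\hbar}{2}$).
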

\section{Proof of Theorem \ref{th:error}}\label{sec:error}
\begin{proof}
We mimic the proof of Theorem \ref{th:mainmain}. 
We consider the dynamics  
from time $t=t_0$ to $t=t_0+\tau$. 
Suppose that $0$ and $1$ are possible outcomes. 
We consider two states $|0\rangle $ and $|1\rangle$ 
satisfying $P_0 |0\rangle = |0\rangle$, $P_1|0\rangle  = 0$, 
$P_1|1\rangle = |1\rangle$, and $P_0|1\rangle =0$ 
and define $|\pm\rangle := \frac{1}{\sqrt{2}}(|0\rangle + |1\rangle )$. 
We consider 
a pair of initial states $|\pm\rangle \langle \pm | \otimes \sigma(t_0)$.  
The states to be compared are 
$\rho_{\pm}:=\mbox{tr}_{\hik}[e^{-i \frac{H}{\hbar}\tau} 
(|\pm\rangle \langle \pm| \otimes \sigma(t_0))
e^{i \frac{H}{\hbar}\tau}]$. 
For an arbitrary operator $\A$ on the system, 
it holds that 
\begin{eqnarray*}
&&|\mbox{tr}[e^{-i \frac{H}{\hbar}\tau} (|(+\rangle \langle + |
- |-\rangle \langle - | )\otimes \sigma(t_0)) e^{i \frac{H}{\hbar}\tau}
(\A\otimes \id)]
\\
&\leq& 2 \Vert \A\Vert 
\sum_n \langle 0 \otimes \phi(t_0)| 
e^{i \frac{H}{\hbar}\tau} (\id \otimes \E_n)
e^{-i \frac{H}{\hbar}\tau}|0 \otimes \phi(t_0)\rangle^{1/2}
\langle 1 \otimes \phi(t_0)| 
e^{i \frac{H}{\hbar}\tau} (\id \otimes \E_n)
e^{-i \frac{H}{\hbar}\tau}|1 \otimes \phi(t_0)\rangle^{1/2}
\\
&=&
2 \Vert \A\Vert 
\sum_n P(n||0\rangle \langle 0|)^{1/2}
P(n | |1\rangle \langle 1|)^{1/2}
\\
&=&
2 \Vert \A\Vert 
\left( P(0||0\rangle\langle 0|)^{1/2} P(0||1\rangle \langle 1|)^{1/2}
+P(1||0\rangle \langle 0|)^{1/2} P(1||1\rangle \langle 1|)^{1/2} 
+ \sum_{n \neq 0,1} 
P(n||0\rangle \langle 0|)^{1/2} P(n||1\rangle \langle 1|)^{1/2} \right)
\\
&\leq &
2\Vert \A \Vert 
\left(P(0||1\rangle \langle 1|)^{1/2} 
+P(1||0\rangle \langle 0|)^{1/2} 
+(\sum_{n\neq 0,1} P(n||0\rangle \langle 0|))^{1/2}
(\sum_{n \neq 0,1}P(n||1\rangle \langle 1|))^{1/2}\right)
\\
&\leq &
6\Vert A\Vert \sqrt{P_{error}}.  
\end{eqnarray*}
Thus we obtain 
$D(\rho_+, \rho_-):= \sup_{\A: \Vert \A\Vert =1}
| \mbox{tr}[(\rho_+ -\rho_-)\A] | \leq 6 \sqrt{P_{error}}$. 
To estimate the magnitude of this perturbation we consider 
unitary evolution governed by the Hamiltonian $H_S$. 
In time $\tau$, this ``unperturbed" dynamics changes 
$|\pm\rangle $ to a pair of orthogonal 
states $|\pm'\rangle$ of the system. 
We then estimate $F(\rho_{+}, |+'\rangle \langle +'|)$ 
and $F(\rho_-, |-'\rangle \langle -'|)$. 
As $|\pm'\rangle$ are orthogonal, we have  
\begin{eqnarray*}
F(\rho_+, |+'\rangle \langle +'|)^2 + F(\rho_-, |-'\rangle \langle -'|)^2 
&=& \langle+'| \rho_+|+'\rangle + \langle -'| \rho_- |-'\rangle 
\\
&
=
& \langle+'|\rho_+ |+'\rangle + \langle-'|\rho_+|-'\rangle 
+\langle-'| \rho_- - \rho_+|-'\rangle
\\
&\leq &
 \mbox{tr}[\rho_+] + D(\rho_+, \rho_-) 
\\
&\leq& 1 + 6 \sqrt{P_{error}}.  
\end{eqnarray*} 
Thus we can conclude 
\begin{eqnarray*}
\min \{ F(\rho, |+'\rangle \langle +'|), F(\rho, |-'\rangle \langle -'|)\} 
\leq \sqrt{ \frac{1+6 \sqrt{P_{error}}}{2}}. 
\end{eqnarray*}
We assume $F(\rho, |+'\rangle \langle +'|)\leq \sqrt{
\frac{1+6 \sqrt{P_{error}}}{2}}$. 
Combining it with Theorem \ref{ETtheorem} 
by putting $\rho(t_0)=|+\rangle \langle +|$, 
we obtain 
\begin{eqnarray*}
\cos\left(
\frac{\tau \Delta H_A}{\hbar}  
\right)\leq \sqrt{ \frac{1+6 \sqrt{P_{error}}}{2}}.
\end{eqnarray*}
This ends the proof. 
\end{proof}

\end{document}